\newcommand{\bea}{\begin{eqnarray}}
\newcommand{\eea}{\end{eqnarray}}
\def\bi{\begin{itemize}}
\def\ei{\end{itemize}}
\def\bc{\begin{center}}
\def\ec{\end{center}}
\def\C{\hbox{$\mit I$\kern-.7em$\mit C$}}
\def\R{\hbox{$\mit I$\kern-.6em$\mit R$}}
\newcommand{\one}{\mbox{$1 \hspace{-1.0mm}  {\bf l}$}}
\def\tr{\mathrm{tr}}
\DeclareMathOperator{\diam}{diam}
\DeclareMathOperator{\id}{id}
\newcommand{\dmin}{\delta_\mathrm{min}}
\newcommand{\dmax}{\delta_\mathrm{max}}
\newcommand{\ghz}{\mathrm{GHZ}_n}
\newtheorem{theorem}{Theorem}
\newtheorem{corollary}[theorem]{Corollary}
\newtheorem{lemma}[theorem]{Lemma}
\newtheorem{definition}[theorem]{Definition}
\begin{document}

\title{Distillation of multipartite entangled states for arbitrary subsets of parties in noisy quantum networks of increasing size}
\author{Aitor Balmaseda}\email{abalmase@math.uc3m.es}
\affiliation{Departamento de Matem\'aticas, Universidad Carlos III de
Madrid, E-28911, Legan\'es (Madrid), Spain}
\author{Julio I. de Vicente}\email{jdvicent@math.uc3m.es}
\affiliation{Departamento de Matem\'aticas, Universidad Carlos III de
Madrid, E-28911, Legan\'es (Madrid), Spain}
\affiliation{Instituto de Ciencias Matem\'aticas (ICMAT), E-28049 Madrid, Spain}

\begin{abstract}
Quantum network states are multipartite states built from distributing pairwise entanglement among parties and underpin the paradigm of quantum networks for quantum information processing. In this work we introduce the problem of partial distillability in noisy quantum networks. This corresponds to the possibility of distilling by local operations and classical communication an arbitrary pure state for an arbitrary subset of parties starting from a single copy of a quantum network state of an increasing number of parties that share noisy (i.e.\ mixed) bipartite entangled states. Here, distillation means that the target state is obtained with fidelity as close to 1 as desired as the size of the network increases when the pairwise entangled links support a constant amount of noise. While we prove an obstruction to multipartite distillation protocols after teleportation with channels with constant noise, we show that partial distillability is indeed possible if certain well-established graph-theoretic parameters that measure the connectivity in the network grow fast enough with its size. We give necessary as well as sufficient conditions for partial distillability in terms of these parameters and we moreover provide explicit constructions of networks with partial distillability and a relatively slow connectivity growth.
\end{abstract}

\maketitle

\section{Introduction}
Multipartite entanglement is an essential resource for quantum information processing enabling tasks such as measurement-based quantum computation \cite{1wayqc}, quantum secret sharing \cite{secretsharing}, conference key agreement \cite{qcka} or quantum sensing \cite{gmesensing1,gmesensing2}. However, while in the bipartite case there is a maximally entangled state, which is therefore the most useful state in all entanglement-related applications, no such notion is possible in general for three or more parties \cite{mes}. Thus, for instance, the so-called GHZ state makes quantum secret sharing and conference key agreement possible but it is useless for measurement-based quantum computation \cite{noghz}, where, for example, the so-called cluster state is on the contrary a universal resource. 

Nevertheless, even if one has a well-defined task to accomplish and, hence, a particular state as target, the distribution of high-fidelity entangled states to many, possibly far-away, parties remains a formidable challenge. One of the problems one has to deal with is the fragility of the particular highly entangled pure states a specific application requires. The unavoidable introduction of noise renders states mixed, diminishing, and even destroying, their resource content. A thoroughly studied way to cope with this issue is given by entanglement distillation (see the review \cite{distillation} and references therein). These protocols act locally on several copies of a noisy state producing fewer copies of a purified state with the promise that the fidelity with the pure target state is as close to 1 as desired if the number of initial copies is sufficiently large and the noise is not too high. Unfortunately, it is not a priori clear how to distill to a particular target and these schemes need to be tailor-made for each case; however, many such protocols have been developed in the literature for different multipartite states of relevance \cite{distillation}. 

Another major obstacle is the distribution itself of the states as this requires a joint quantum channel that can send reliably quantum communication to many distant parties. Quantum networks are gaining considerable attention in this respect since they provide a relatively feasible way to achieve this \cite{qinternet}. The main idea here is, instead of directly aiming at creating an entangled state shared among all parties, one focuses initially at establishing entangled links only between pairs of parties, which is arguably easier (e.g.\ because there is an underlying topology given by distance constraints). Then, by combining distillation, teleportation and other techniques one can in principle use the established network to distribute any multipartite entangled state of choice by local operations and classical communication (LOCC). For this to work in a noisy environment, the bipartite entangled links need to be constantly refreshed and many recent works study specific protocols to distribute multipartite entanglement in networks using as figures of merit the number of perfect links consumed, waiting time, and final obtained fidelity together with potential sources of error (see e.g.\ \cite{networks1,networks2,networks3,networks4} and the review \cite{reviewn}).

Nevertheless, a static approach is also possible in which one views the quantum network as a fixed given resource state in which the entangled links can no longer be replenished. This perspective is more common from the abstract point of view of entanglement theory and underpins vastly explored phenomena such as entanglement percolation (see the review \cite{percolationreview} and references therein) and network-Bell-nonlocality (see the review \cite{bellnetworks} and references therein). Additionally, recent works study what LOCC transformations are possible from a single copy of a network state \cite{yamasakin,speen} and these states are sometimes used as convenient constructions to display a certain desired property such as maximal pairability \cite{pairable}, genuine multipartite nonlocality \cite{networkGMNL} or superactivation thereof \cite{AGME1}. While most of these works consider pure entangled links, it is of clear practical interest to investigate the resource nature of noisy network states (i.e.\ when the bipartite entangled connections become mixed). This is the approach taken in \cite{AGME1, AGME2}, where the ability of the network state to display asymptotically robust genuine multipartite entanglement (GME) has been shown to depend drastically on the properties of the graph that codifies the connectivity pattern of the network. In these models, referred to as isotropic quantum networks, the mixedness of the links is captured by a parameter $p\in[0,1]$ called visibility, where $p=1$ corresponds to perfect bipartite maximally entangled connections. The aforementioned result states that for some networks, for any given $p<1$ the corresponding network state is not GME for a sufficiently big size, while others are GME for any number of parties if the visibility is above a certain fixed threshold. This indicates that certain network patterns are severely limited in practical applications, where perfect visibility is an unjustified idealization.  

A relevant question in this context is whether starting from a noisy $N$-partite network state one can obtain by LOCC a particular $N$-partite pure entangled state. However, this is in general impossible because entanglement distillation can only yield a pure state from a mixed one in the asymptotic limit of infinitely many initial copies \cite{fang}. Nevertheless, this does not forbid purification for a fixed subset of $m$ parties from a single copy in the asymptotic limit $N\to\infty$ in which the network becomes arbitrarily large. This is the main subject of study of this work. That is, we consider whether it is possible starting from a big enough noisy network state to redistribute the mixed entangled links so as to concentrate the entanglement into a particular subset of parties and obtain a desired $m$-partite pure entangled state with fidelity arbitrarily close to one for $N$ sufficiently large. Moreover, we would like this to happen for an arbitrary subset of parties of arbitrary cardinality $m$ (i.e. for any given $m$ independent of $N$) and for an arbitrary pure $m$-partite target state with a fixed threshold (i.e.\ independent of the subset of parties, $m$ and the target state) in the visibility of the entangled links that constitute the network. In this way, if the entangling devices are sufficiently developed so as to guarantee a non-perfect but good enough visibility, then we can be sure that the task can be accomplished independently of the particular parties and the particular application (i.e.\ target state) we might want to aim for later. We refer to the property described above as partial distillability. This is in a certain sense analogous to the case of classical networks, where one implements some task involving a subset of nodes by exploiting all connections in the network letting the information flow collaboratively through all the nodes that constitute it.

Our main result is that partial distillability is in fact possible provided the graph sequence that encodes how the network grows is sufficiently connected in some precise sense. In more detail, certain $N$-partite network sequences have the property that if the visibility $p$ is above a certain universal threshold that only depends on the network configuration and the underlying dimension, then one can obtain by LOCC any $m$-partite state for any subset of $m$ parties for every fixed $m$ with fidelity going to 1 as $N\to\infty$. On the other hand, other network patterns cannot achieve this robustly, i.e.\ for any given visibility $p<1$ there exists $m\in\mathbb{N}$ such that distillation to $m$-partite states is impossible for at least one subset of $m$ parties. Actually, the study of asymptotic robustness of GME already connected this property to the ability to distill maximally entangled states between arbitrary pairs of parties \cite{AGME1, AGME2}. Here, we build on the techniques developed therein in order to establish this property for arbitrary multipartite pure states and arbitrary subsets of parties.

This article is organized as follows. In the next section we introduce in mathematical detail all necessary concepts. In Sec.\ III we prove a severe limitation for distillability in star networks. Namely, we prove that if one uses an isotropic star network to teleport a GHZ state, then the resulting state is GHZ-undistillable for any $p<1$ for a sufficiently large subset of parties. Thus, not only it is impossible to purify entanglement for an arbitrary subset of parties from a single copy of the star network, but, even if we stored arbitrarily many copies of the states resulting from teleportation, it is impossible to obtain GHZ states with arbitrarily large fidelity by LOCC. This shows that multipartite distillation protocols are doomed to failure even if we had arbitrarily many copies of the isotropic star network state. Thus, one must have the ability to operate on the many copies of the noisy bipartite entangled links beforehand and apply bipartite distillation protocols before teleportation. In addition to the practical limitation this entails, this result also shows that when we consider later on more connected networks, teleportation and subsequent multipartite distillation cannot work and one is then bound to protocols in which bipartite distillation of the links has to precede teleportation. In Sec.\ IV we present our main results showing that distillation to arbitrary states for arbitrary subsets of parties is possible with a fixed visibility threshold if the graph sequence corresponding to the network has both minimum degree and edge-connectivity growing sufficiently fast. In addition to this, we prove that the latter condition is not only sufficient but also necessary while the former is not necessary. In order to prove the last claim, we explicitly construct a graph sequence with slow degree growth which nevertheless achieves partial distillability. We conclude in Sec.\ V with some discussion and open questions.

\section{Mathematical preliminaries}\label{sec1}

\subsection{Graph theory}

In this subsection we provide the definition of all the graph-theoretic concepts that are used in this work. For a more detailed discussion we refer our readers to the well-established literature in the field such as \cite{bondy,diestel}. All graphs in this article are finite and simple. Thus, a graph is a pair $G=(V,E)$, where $V$ is a finite set of vertices and $E$ a set of edges, consisting of unordered pairs $(u,v)$ of vertices $u,v \in V$ ($u\neq v$). The order of $G$, denoted $|G|$, is the cardinality of the set of vertices $V$ and the size of $G$ is the cardinality of the set of edges $E$.
Two vertices $u,v \in V$ are said to be adjacent if $(u,v) \in E$.
We denote the set of edges adjacent to a vertex $v \in V$ by $E(v)$.
The degree of a vertex $v \in V$, $\delta(v) = |E(v)|$, is the number of adjacent vertices to it, and we denote by $\dmin(G) = \min_{v \in V} \delta(v)$ ($\dmax(G) = \max_{v \in V} \delta(v)$) the minimal (maximal) degree of a graph $G$.
A path is a sequence of edges joining subsequent vertices, $(v_0,v_1),(v_1,v_2), \dots, (v_{\ell-1},v_\ell)$, and when $v_j \neq v_i$ for $i \neq j$ the path is said to be simple. A cycle is a path starting and ending on the same vertex. The length of a path is the number of edges in the path and the distance between two vertices $u,v \in V$, $d_G(u,v)$, is the length of the shortest path joining $u$ and $v$. 

A graph $G$ is said to be connected if for any two vertices there is a path joining them. Intuitively, $\dmin(G)$ and $\dmax(G)$ provide an idea of how connected a given graph $G$ is. However, other notions are possible such as the edge-connectivity of $G$, denoted by $\lambda(G)$, which is the minimum number of edges one needs to remove from $G$ before it becomes disconnected. By Menger's theorem, it also holds that $\lambda(G)=\min_{u,v\in V}\lambda_G(u,v)$, where (for $u\neq v$) $\lambda_G(u,v)$ is the maximal number of edge-disjoint paths joining $u$ and $v$. Notice that for every graph $G$ it holds that $\lambda(G)\leq\dmin(G)\leq|G|-1$, with equality in both cases at the same time only in the case of the completely connected graph (i.e.\  a graph in which every pair of vertices is adjacent and, thus, the most connected graph). Another graph parameter that can be linked to connectivity is the diameter of a graph, which is given by $\mathrm{diam}(G):=\max_{u,v\in V}d_G(u,v)$, and where it should be understood that $\mathrm{diam}(G)=\infty$ if $G$ is not connected.
 
A connected graph without any cycles is called a tree, and a path of degree-two vertices starting and ending in vertices with degree different than two is called a branch. A star graph is a tree in which there is only one vertex with degree bigger than 1, referred to as center. A spider graph (or spider tree, or subdivided star graph) is a tree with at most one vertex with degree bigger than 2. We will call the vertex with degree bigger than 2 the spider's centre (or body) and we call each of the branches starting from it and ending on degree-one vertices its legs.

\subsection{Quantum (network) states and distillability}

In this article we will always consider finite-dimensional Hilbert spaces. A quantum state is given by any positive semidefinite unit-trace operator $\rho$, referred to as density operator, acting on a given Hilbert space $\mathcal{H}$. The set of all density operators on $\mathcal{H}$ is denoted by $D(\mathcal{H})$. When a density operator is a rank-1 projector we say that the corresponding state is pure. When a quantum system is composed by $N$ subsystems, labelled by the elements of $[N]:=\{1,2,\ldots,N\}$, the corresponding Hilbert space is given by $\mathcal{H}=\bigotimes_{k=1}^N\mathcal{H}_k$, where $\mathcal{H}_k$ is the Hilbert space corresponding to party $k$. In the bipartite case, $\mathcal{H}=\mathcal{H}_1\otimes\mathcal{H}_2$, we denote by
\begin{equation}\label{maxent}
|\phi^+\rangle=\frac{1}{\sqrt{d}}\sum_{i=1}^d|ii\rangle,\quad \phi^+=|\phi^+\rangle\langle\phi^+|
\end{equation}
the maximally entangled state, where $d=\dim\mathcal{H}_1=\dim\mathcal{H}_2$, we use the shorthand $|ij\rangle=|i\rangle_1\otimes|j\rangle_2$ and $\{|i\rangle_k\}_{i=1}^d$ denotes the canonical basis of $\mathcal{H}_k$. For any $p\in[0,1]$ the isotropic state is given by
\begin{equation}\label{isotropic}
\rho(p) = p \phi^++  \frac{(1-p)}{d^2} \one,
\end{equation}
where $p$ is referred to as the visibility and $\one$ stands for the identity operator on $\mathcal{H}$. It is known that the isotropic state is entangled if and only if (iff) $p>1/(d+1)$ \cite{isotropicpaper} and it is a common noise model in which one produces a maximally entangled state with probability $p$ and white noise with probability $1-p$.

The manipulation of multipartite states distributed to distant parties is constrained to a particular class of completely positive and trace preserving (CPTP) maps $\Lambda:D(\mathcal{H})\to D(\mathcal{H}')$ known as LOCC maps. These maps capture the fact that the parties can only induce local dynamics correlated by the exchange of classical communication. The precise (and cumbersome) definition of LOCC CPTP maps is not necessary to follow this paper and the interested reader is referred to \cite{locc1,locc2}. We will only use some basic facts such as that the set of LOCC maps is convex and closed under composition, that teleportation (see Sec.\ II.C below) can be implemented by LOCC and that every map whose output is a fixed separable state is LOCC. The identity map on $D(\mathcal{H})$, which we denote by $\id$, is also obviously LOCC. In order to measure the closeness of quantum states we will consider the fidelity, given by
\begin{equation}\label{fidelity}
F(\rho,\sigma) = \tr^2 \sqrt{\sqrt{\rho}\sigma \sqrt{\rho}},
\end{equation} 
for any $\rho,\sigma\in D(\mathcal{H})$. Notice that it always holds that $0\leq F(\rho,\sigma)\leq1$, with the latter equality only if $\rho=\sigma$. A bipartite state $\rho\in D(\mathcal{H})=D(\mathcal{H}_1\otimes\mathcal{H}_2)$ is said to be distillable to the maximally entangled state (or distillable for short) if there exists an LOCC map $\Lambda_k:D(\mathcal{H}^{\otimes k})\to D(\mathcal{H})$ such that 
\begin{equation}
\lim_{k\to\infty}F(\Lambda_k(\rho^{\otimes k}),\phi^+)=1.
\end{equation}
A state is distillable only if it is entangled. In the particular case of isotropic states, they are distillable iff entangled \cite{isotropicpaper}. Even if entangled, states that remain positive semidefinite after partial transposition (or PPT for short) are known not to be distillable \cite{pptundistillable}. Partial transposition is defined to be the linear map $\Gamma:B(\mathcal{H}_1\otimes\mathcal{H}_2)\to B(\mathcal{H}_1\otimes\mathcal{H}_2)$ such that $\Gamma(|ij\rangle\langle kl|)=|kj\rangle\langle il|$.

An $N$-partite quantum network state is described by a graph $G=(V,E)$ such that $|G|=N$. The vertices (i.e.\ the elements of $V$) represent the parties and the edges (i.e.\ the elements of $E$) that the corresponding pair of parties share some bipartite state to be specified. Thus, we denote the total Hilbert space by $\mathcal{H}_G$ and $\mathcal{H}_G = \bigotimes_{v \in V} \mathcal{H}_v$, where $\mathcal{H}_v$ represents the Hilbert space corresponding to the party represented by the vertex $v\in V$. Notice that there is a bijective identification between the elements of $V$ and those of $[N]$ and, thus, in a slight abuse of notation, we will refer to both vertices and parties as elements of these two sets equally. If vertices $u$ and $v$ share the state $\rho$, then we write $\rho_e$, where $e=(u,v)\in E$, and $\rho_e\in D(\mathcal{H}_e)$, where $\mathcal{H}_e=\mathcal{H}_{ue}\otimes\mathcal{H}_{ve}$. Thus, $\mathcal{H}_v=\bigotimes_{e\in E(v)} \mathcal{H}_{ve}$. Furthermore, it also holds that $\mathcal{H}_G=\bigotimes_{e\in E}\mathcal{H}_e$, even though $\mathcal{H}_e$ does not correspond to the local Hilbert space of any party. 

Isotropic quantum networks correspond to the case in which all parties joined by an edge share an isotropic state with the same visibility $p$ and local dimension $d$ (we will assume the latter parameter to be clear from the context and will not explicitly refer to it). Thus, given a graph $G$ and a visibility $p\in[0,1]$ (and local dimension $d$), the corresponding isotropic network state is given by
\begin{equation}\label{isotropicnetwork}
\sigma(G,p)=\bigotimes_{e\in E}\rho_e(p),
\end{equation}
where $\rho(p)$ is given by Eq.\ (\ref{isotropic}). Therefore, in this case for each party $v\in V$ it holds that $\dim\mathcal{H}_v=d^{\delta(v)}$ and $\dim\mathcal{H}_G=d^{2|E|}$. Notice that if $G$ is not connected (or $p\leq1/(d+1)$), then $\sigma(G,p)$ is not entangled. Hence, in the following we will always assume that all graphs are connected. Since we want to analyze the properties of isotropic network states as the size of the network increases according to a particular pattern, once the visibility is specified all the necessary information is encoded into a graph sequence $(G_n)_{n\in\mathbb{N}}$ (with $G_n=(V_n,E_n)$) that specifies this pattern and such that $N=N(n)=|G_n|\to\infty$ as $n\to\infty$ (we will assume all graph sequences in the following to fulfill this latter condition). Furthermore, we would like to study entanglement concentration for specific subsets of parties as $n\to\infty$; thus, we will always assume that $V_n\subset V_{n+1}$ $\forall n$. Finally, we have to constrain the dimensionality of the arbitrary target states we would like to distill to for a given subset of parties $V$; thus, we set the notation $\mathcal{H}'_V=\bigotimes_{v\in V}\mathcal{H}'_v$ with $\dim\mathcal{H}'_v=d$ $\forall v$ for the output Hilbert spaces. We can now precisely define the property of isotropic network states that we want to analyze as described informally in the introduction and that we term partial distillability.

\begin{definition}\label{partialdist}
We say that a graph sequence $(G_n)_{n\in\mathbb{N}}$ has partial distillability if there exists a fixed value $p_0<1$ such that for any given $m\in\mathbb{N}$, any subset of parties $V_0$ such that $|V_0|=m$ and any pure state $\psi\in D(\mathcal{H}'_{V_0})$, there exists an LOCC map $\Lambda_{n,p}:D(\mathcal{H}_{G_n})\to D(\mathcal{H}'_{V_0})$ such that
\begin{equation}\label{partialdisteq}
\lim_{n\to\infty}F(\Lambda_{n,p}(\sigma(G_n,p)),\psi)=1
\end{equation}
holds for all $p> p_0$.
\end{definition}

Thus, for any given graph sequence the threshold in the visibility $p_0$ can only depend on $d$. Notice that in the above definition whenever $m$ and $V_0$ are chosen, one only considers the subsequence $(G_{n})_{n\geq n_0}$ corresponding to the terms such that $|G_{n}|\geq m$ and $V_0\subset V_{n}$ $\forall n\geq n_0$, which is always ensured by a choice of $n_0$. In order not to make things more complicated, we have not incorporated these details into the definition and subsequent discussions, hoping that this will be clear to the reader.

\subsection{Noisy teleportation} \label{subsec:noisy_teleport}

The main ingredient in the LOCC protocols that we are going to consider is teleportation \cite{teleport,nielsenchuang}. We denote by $\mathcal{T}: D(\mathcal{H}_1 \otimes \mathcal{K}_1 \otimes \mathcal{H}_2) \to D(\mathcal{H}_2)$ the channel implementing the standard bipartite teleportation protocol (where $\dim\mathcal{H}_1=\dim\mathcal{K}_1 =\dim \mathcal{H}_2$ and $\mathcal{H}_1 \otimes \mathcal{K}_1$ is the Hilbert space of one party and $\mathcal{H}_2$ is the Hilbert space of the other). That is, if $\phi^+\in D(\mathcal{K}_1\otimes\mathcal{H}_2)$ denotes the maximally entangled state (cf.\ Eq.\ (\ref{maxent})), then
\begin{equation}
\mathcal{T}(\tau \otimes \phi^+) = \tau
\end{equation}
holds for every choice of input state $\tau\in D(\mathcal{H}_1)$. We will be often using the entangled links in an isotropic network state to teleport quantum states through the network, which will be noisy when the visibility satisfies $p<1$. Applying the same protocol by using an isotropic state instead of a maximally entangled state yields
\begin{equation}
\mathcal{T}(\tau \otimes \rho(p)) = p \tau + (1-p) \frac{\one}{d}.
\end{equation}
We define the noisy teleportation channel $\mathcal{T}_p: D(\mathcal{H}_1) \to D(\mathcal{H}_2)$ as
\begin{equation}
\mathcal{T}_p(\tau):=\mathcal{T}(\tau \otimes \rho(p)),
\end{equation}
which acts on the canonical basis as
\begin{equation}
\mathcal{T}_p(|i \rangle\langle j|) = p|i \rangle\langle j| + (1-p) \delta_{ij} \frac{\one}{d}.
\end{equation}

We will consider later two sequential versions of this noisy teleportation. Notice first that if we use an isotropic state with visibility $p$ to teleport one share of another isotropic state with the same visibility, then we obtain another isotropic state with visibility $p^2$, i.e.\
\begin{equation}
\id\otimes\mathcal{T}_p(\rho(p))=\rho(p^2).
\end{equation}
Thus, the first version corresponds to sequentially teleporting over a path in an isotropic network in order to establish entanglement between the initial and final nodes. Given a path $P = (v_0, v_1) (v_1, v_2) \dots (v_{\ell-1}, v_\ell)$ of length $\ell$ joining $v_0$ with $v_\ell$, each party teleports the part of the state shared with the previous party in the path to the next one. On each teleportation step the received state is a noisier isotropic state, and the full process boils down to an LOCC protocol, $\Lambda_P: \mathcal{H}_P \to \mathcal{H}'_{\{v_0,v_\ell\}}$, satisfying
\begin{equation} \label{eq:teleportation-path}
\Lambda_P (\sigma(P,p)) = \rho_{(v_0, v_\ell)}(p^{2^{\ell-1}}).
\end{equation}
This allows the parties $v_0$ and $v_\ell$ to share a quantum channel via further noisy teleportation provided the length of the path $\ell$ is small enough and the initial visibility $p$ is large enough.

Another option is to directly distribute a multipartite quantum state to some set of parties $V$ by means of a star graph $S=(V^\ast,E)$, where $V^*=V \cup\{v_0\}$ and $v_0$ is its center. This additional party can locally prepare any state $\tau\in D(\mathcal{H}'_V)$ and teleport each share of it to each of the parties in $V$ by means of each edge in the star graph using the corresponding isotropic state as a resource for noisy teleportation. Thus, this corresponds to implementing the LOCC map from $D(\mathcal{H}_S)$ to $D(\mathcal{H}'_V)$ given by $(\mathcal{T}_p)^{\otimes|V|}(\tau)$ for any chosen $\tau\in D(\mathcal{H}'_V)$. If the parties in $V$ are linked to many additional parties giving rise to star graphs with different centers, they can iterate this process in order to end up with many noisy copies of the target state, to which they can apply multipartite distillation protocols as a final stage. We will in particular later on consider the paradigmatic GHZ $n$-qubit state
\begin{align}\label{ghz}
|\ghz\rangle&=\frac{1}{\sqrt{2}}(|0\rangle^{\otimes n}+|1\rangle^{\otimes n}),\nonumber\\ \ghz&=|\ghz\rangle\langle \ghz|,
\end{align}
which is an essential resource in many applications. In this case a lengthy but straightforward calculation yields
\begin{align}
(\mathcal{T}_p)^{\otimes n}(\ghz) &= p^n \ghz \nonumber\\ &+
    \sum_{k=1}^{n-1} p^{n-k} (1-p)^k \sum_{C \in \mathcal{C}(n,k)} \sigma_k^C \nonumber\\ &+ (1-p)^n \frac{\one}{2^n},\label{teleportedghz}
\end{align}
where $\mathcal{C}(n,k)$ represents the set of subsets of $[n]$ with size $k$ and
\begin{equation}\label{sigmak}
    \sigma_k^C = \frac{1}{2}\left(
      \bigotimes_{\ell \notin C} | 0 \rangle_{\ell} \langle 0 | +
      \bigotimes_{\ell \notin C} | 1 \rangle_{\ell} \langle 1 |
    \right) \otimes \frac{\one_C}{2^k}.
\end{equation}

\section{Limitation of multipartite distillation protocols}\label{sec2}

Before addressing the main question of whether partial distillability, as given in Definition 1, is possible and, if so, which graph sequences display it, in this section we establish first a fundamental limitation of the second class of sequential noisy teleportation protocols that we presented at the end of the previous section. This result dooms to failure strategies that are based on multipartite distillation already for the particular paradigmatic case of the GHZ state, carries an important weight for the analysis to follow, and might be of independent practical interest. We begin by stating precisely this claim in the following theorem.

\begin{theorem}\label{th:noghzdistillation}
For any given $p<1$, there exists $n_0\in\mathbb{N}$ such that the state $(\mathcal{T}_p)^{\otimes n}(\ghz)$ (cf.\ Eq.\ (\ref{teleportedghz})) that arises after teleporting a GHZ state in an isotropic star network is not GHZ-distillable $\forall n\geq n_0$.
\end{theorem}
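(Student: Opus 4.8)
\section*{Proof proposal}

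The plan is to exhibit, for every fixed $p<1$ and all sufficiently large $n$, a bipartition of the $n$ parties across which the teleported state $(\mathcal{T}_p)^{\otimes n}(\ghz)$ is PPT. A state that is PPT across a bipartition $A|B$ cannot be turned by LOCC into anything NPT across $A|B$: $n$-partite LOCC is in particular LOCC across the cut, PPTness is preserved both under such operations and under tensor powers (this is the content of the undistillability of PPT states, Ref.~\cite{pptundistillable}, applied to the cut), and the PPT set is closed. Since $\ghz$ is maximally entangled, hence NPT, across every bipartition, the fidelity of any LOCC image of $[(\mathcal{T}_p)^{\otimes n}(\ghz)]^{\otimes k}$ with $\ghz$ stays bounded away from $1$, ruling out GHZ-distillability. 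The only thing requiring thought is \emph{which} cut to use.

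First I would record the structural fact behind everything. In Eq.~(\ref{teleportedghz}) every term other than $p^n\ghz$ is diagonal in the computational basis and separable: each $\sigma_k^C$ of Eq.~(\ref{sigmak}) is a mixture of two computational-basis product states and $\one/2^n$ trivially so. Writing $(\mathcal{T}_p)^{\otimes n}(\ghz)=p^n\ghz+(1-p^n)\,\rho_{\mathrm{sep}}$ with $\rho_{\mathrm{sep}}$ diagonal and separable, the partial transpose across any cut $A|B$ fixes $\rho_{\mathrm{sep}}$ (partial transposition leaves operators diagonal in a product basis unchanged), while $\ghz^\Gamma=\tfrac12(\proj{0_A0_B}+\ket{0_A1_B}\bra{1_A0_B}+\ket{1_A0_B}\bra{0_A1_B}+\proj{1_A1_B})$ has its only off-diagonal entries connecting $\ket{0_A1_B}$ and $\ket{1_A0_B}$. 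Hence the full partial transpose is block diagonal; every diagonal entry is manifestly non-negative, and positivity reduces to the single $2\times 2$ block
\[
\begin{pmatrix} Q & p^n/2 \\ p^n/2 & Q\end{pmatrix}\succeq 0 \iff Q\geq \frac{p^n}{2},
\]
where $Q=\bra{0_A1_B}(\mathcal{T}_p)^{\otimes n}(\ghz)\ket{0_A1_B}$ and the two equal diagonal entries follow from the bit-flip symmetry $X^{\otimes n}$ of the state.

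Next I would compute $Q$. Only the diagonal part $\tfrac12(\proj{0}^{\otimes n}+\proj{1}^{\otimes n})$ of $\ghz$ feeds the output diagonal, and $\mathcal{T}_p(\proj{s})=\alpha\proj{s}+\beta\proj{1-s}$ with $\alpha=\tfrac{1+p}{2}$, $\beta=\tfrac{1-p}{2}$; for a cut with $|A|=a$, $|B|=n-a$ this yields $Q=\tfrac12(\alpha^a\beta^{\,n-a}+\beta^a\alpha^{\,n-a})$. Using $\alpha-\beta=p$, the PPT condition becomes $\alpha^a\beta^{\,n-a}+\beta^a\alpha^{\,n-a}\geq p^n$. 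The left-hand side is convex and symmetric under $a\leftrightarrow n-a$, so it is minimized at the balanced cut, which is useless here since there it only holds for $p\lesssim 1/\sqrt5$, and maximized at the extreme cut $a=1$.

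For the one-versus-rest cut the condition reads $\alpha\beta^{\,n-1}+\beta\alpha^{\,n-1}\geq p^n$, and establishing it for large $n$ is the crux. Its dominant term $\beta\alpha^{\,n-1}$ decays like $\alpha^n$, whereas the surviving GHZ coherence decays like $p^n$, and $\alpha=\tfrac{1+p}{2}>p$ exactly when $p<1$; hence $\beta\alpha^{\,n-1}/p^n\to\infty$ and there is an $n_0$ (depending on $p$) beyond which the inequality, and with it PPTness across this cut, holds, which completes the argument. I expect the main obstacle to be conceptual rather than computational: one must recognize that the naive balanced bipartition fails for visibilities close to one and that the maximally unbalanced cut is the correct choice, the mechanism being that its diagonal noise floor $\sim\alpha^n$ vanishes strictly more slowly than the coherence $p^n$ carrying entanglement across the cut.
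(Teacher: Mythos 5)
Your proof is correct, and it shares the paper's overall strategy: establish that $(\mathcal{T}_p)^{\otimes n}(\ghz)$ is PPT across some bipartition for all large $n$, then conclude non-distillability since LOCC and tensor powers preserve PPT across a cut while any state with GHZ fidelity near $1$ must be NPT across every cut. The genuine difference lies in how PPT-ness is proved. The paper (Lemmas \ref{prop:sigma_k_c_eigenvectors}--\ref{lemma:ppt}) diagonalizes the partially transposed state exactly in the basis $\{\ket{\Psi_j^\pm}\}$, which requires the combinatorial evaluation of the sums $S_j$ via binomial identities, and in return delivers the full spectrum and PPT-ness across every cut $M|\overline{M}$ of fixed size. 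You instead exploit the fact that all noise terms in Eq.~(\ref{teleportedghz}) are diagonal in the computational basis and hence fixed by partial transposition, so the partial transpose is block diagonal with a single nontrivial $2\times 2$ block spanned by $\ket{0_A1_B},\ket{1_A0_B}$; PPT-ness then collapses to the scalar inequality $\alpha^a\beta^{n-a}+\beta^a\alpha^{n-a}\geq p^n$ (with $\alpha=\tfrac{1+p}{2}$, $\beta=\tfrac{1-p}{2}$, $a=|A|$), which for fixed $a$ holds for all large $n$ because $\alpha>p$ whenever $p<1$. This inequality is exactly the positivity condition the paper derives for the unique potentially negative eigenvalue (that of $\ket{\Psi_r^-}$, with $w_H(r)=|M|$ playing the role of your $a$), so the two computations agree; your route reaches it without the spectral and combinatorial machinery, making the mechanism transparent (the diagonal noise floor $\sim\alpha^n$ beats the surviving coherence $p^n$), at the cost of producing only what the theorem needs rather than the full spectrum. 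Two small remarks: any fixed $a$, not only $a=1$, works (your own estimate shows this, and it is what Lemma \ref{lemma:ppt} asserts), so presenting the maximally unbalanced cut as \emph{the} correct choice is slightly overstated; and your closing fidelity argument implicitly uses compactness of the PPT set on the fixed output space to get a bound uniform in the number of copies, which is worth stating explicitly.
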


This means that, outside the ideal case of a perfect visibility, for any value this parameter may take, if the number of parties is big enough, then the state resulting from noisy-teleporting the GHZ state is no longer GHZ-distillable. Hence, even if we could store an arbitrary number of copies of the teleported state and act on them with arbitrary LOCC protocols, the obtainable fidelity with the GHZ state remains bounded away from 1. This applies to any non-perfect value of the initial visibility no matter how close to 1, and, therefore, there is no threshold in the visibility for which this purification strategy works for an arbitrary number of parties. In a previous work, a similar obstruction had been found for the question of GME versus biseparability. Namely, any tree isotropic network (and, hence, a star isotropic network) becomes biseparable for any given $p<1$ if its size is big enough \cite{AGME1}. Notice, however, that this does not forbid distillability because GME can be superactivated \cite{GMEactivation1,GMEactivation2} and biseparable states can in principle be distilled to GME states \cite{distillationbiseparable}. 

Theorem 2 has several important consequences. First, this shows a major obstacle for partial distillability for strategies based on multipartite distillation. If the vertices in our set of interest, $V_0$, are connected via arbitrarily many star subgraphs to arbitrarily many different centers, for any given visibility we cannot obtain in this way a GHZ state with fidelity as close to 1 as desired if the size of $V_0$ is big enough even if the number of star subgraphs is unbounded. Thus, partial distillability is impossible in this scenario by teleportation of the GHZ state and subsequent multipartite distillation because the threshold $p_0$ in the visibility would need to depend on $|V_0|$. Second, this is also relevant in the standard quantum network scenario in which the entangled links are constantly replenished and where star configurations are often considered (see e.g.\ \cite{networks3}). This is because if we teleport over the noisy links and rebuild the entangled pairs to reiterate this process arbitrarily many times, we will not be able to distill to a GHZ state with subsequent purification protocols if the size of the network is big enough for any given initial visibility $p<1$. Notice, however, that this does not mean that noisy quantum networks are useless above a certain size. What this shows is that if we want to distribute entangled states to an arbitrarily large number of parties and if the entangled links can be refreshed, then they should be stored so that we can apply bipartite distillation protocols to purify links and then teleport sufficiently noiselessly. Similarly, Theorem 2 does not imply that partial distillability is impossible. Nevertheless, it indicates that, rather than applying multipartite distillation protocols after teleportation, bipartite entanglement needs to be routed in the network so that bipartite distillation precedes the final teleportation step in the line of the first class of sequential teleportation protocols outlined at the end of the previous section.  

Before developing these ideas in the next section, we conclude this one presenting the proof of Theorem~\ref{th:noghzdistillation}. For this, we state and prove first some technical lemmas. Also, from \cite{DurCirac2000}, we introduce the following notation. For $0 \leq j < 2^{n-1}$ and letting $j_1j_2 \dots j_{n-1}$ be the binary expression of $j$ we define the vectors
\begin{equation}
|\Psi_j^{\pm}\rangle = \frac{1}{\sqrt{2}} (|j_1j_2 \dots j_{n-1}\rangle |0\rangle \pm |\overline{j_1} \,\overline{j_2} \dots \overline{j_{n-1}}\rangle |1\rangle),
\end{equation}
which give an orthonormal basis of $(\mathbb{C}^2)^{\otimes n}$ (here and in the following $\overline{j_k}$ denotes the binary negation of $j_k$). Notice that $|\ghz\rangle=|\Psi_0^+ \rangle$.


 
\begin{lemma} \label{prop:sigma_k_c_eigenvectors}
Let $1\leq k < n$ and $0\leq j < 2^{n-1}$ be integers and $C \in \mathcal{C}(n,k)$. Denote by $j_1j_2 \dots j_{n-1}$ the binary expression of $j$, and define $j_n=0$. Then, $|\Psi_j^{\pm}\rangle$ is an eigenvector of $\sigma_k^C$ as given in Eq.\ (\ref{sigmak}). Moreover, the associated eigenvalue is non-vanishing only if for every $\ell_1,\ell_2 \notin C$ it holds $j_{\ell_1} = j_{\ell_2}$. That is,
\begin{equation}
      \sigma_k^C |\Psi_j^{\pm}\rangle = \begin{cases}
        \dfrac{1}{2^{k+1}}|\Psi_j^{\pm}\rangle & \text{if }j_{\ell_1} = j_{\ell_2} \forall \ell_1,\ell_2 \notin C, \\
        0 & \text{otherwise}.
      \end{cases}
\end{equation}
\end{lemma}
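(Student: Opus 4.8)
The plan is to read the action of $\sigma_k^C$ off its tensor structure across the bipartition $C\,|\,\bar{C}$, where $\bar{C}:=[n]\setminus C$; note $1\leq k<n$ guarantees both $C$ and $\bar{C}$ are nonempty. First I would rewrite the basis vectors compactly: with $j_n:=0$ fixed as in the statement, let $\mathbf{s}=j_1j_2\cdots j_n$ denote the full $n$-bit string and $\bar{\mathbf{s}}$ its bitwise complement, so that $|\Psi_j^{\pm}\rangle=\tfrac{1}{\sqrt{2}}(|\mathbf{s}\rangle\pm|\bar{\mathbf{s}}\rangle)$. From Eq.\ (\ref{sigmak}), after a harmless permutation of tensor factors, $\sigma_k^C$ factorizes as
\begin{equation}
\sigma_k^C = M_{\bar{C}}\otimes\frac{\one_C}{2^k},\qquad M_{\bar{C}}=\tfrac12\big(\proj{0\cdots0}+\proj{1\cdots1}\big),
\end{equation}
where $M_{\bar{C}}$ acts on the qubits in $\bar{C}$ and the maximally mixed factor $\one_C/2^k$ on those in $C$.

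Next I would compute the action on a single computational basis vector. Splitting $|\mathbf{s}\rangle=|\mathbf{s}_{\bar{C}}\rangle\otimes|\mathbf{s}_C\rangle$ into its substrings on $\bar{C}$ and $C$, the maximally mixed factor acts as the scalar $1/2^k$ on the basis state $|\mathbf{s}_C\rangle$, while $M_{\bar{C}}|\mathbf{s}_{\bar{C}}\rangle$ vanishes unless $\mathbf{s}_{\bar{C}}$ is the constant string $0\cdots0$ or $1\cdots1$, in which case $|\mathbf{s}_{\bar{C}}\rangle$ is an eigenvector of $M_{\bar{C}}$ with eigenvalue $1/2$. Hence $\sigma_k^C|\mathbf{s}\rangle=\frac{1}{2^{k+1}}|\mathbf{s}\rangle$ when $\mathbf{s}$ is constant on $\bar{C}$ and $\sigma_k^C|\mathbf{s}\rangle=0$ otherwise, and the property ``$\mathbf{s}$ constant on $\bar{C}$'' is precisely the stated condition $j_{\ell_1}=j_{\ell_2}$ for all $\ell_1,\ell_2\notin C$.

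Finally I would combine the two terms. The key observation is that the bitwise complement of a constant string is again constant, and the complement of a mixed string is again mixed; therefore $\mathbf{s}$ and $\bar{\mathbf{s}}$ are simultaneously constant on $\bar{C}$ or simultaneously not. In the first case both $|\mathbf{s}\rangle$ and $|\bar{\mathbf{s}}\rangle$ are eigenvectors of $\sigma_k^C$ with the same eigenvalue $1/2^{k+1}$, and in the second both are annihilated; by linearity this immediately yields $\sigma_k^C|\Psi_j^{\pm}\rangle=\frac{1}{2^{k+1}}|\Psi_j^{\pm}\rangle$ or $0$ as claimed, for both signs at once. This argument is essentially bookkeeping rather than a genuine obstacle; the only points requiring care are that qubit $n$ is treated uniformly (with $j_n=0$) regardless of whether $n\in C$, and that it is the complement-invariance of the ``constant on $\bar{C}$'' property that promotes the individual computational basis states to eigenvectors of the superposition $|\Psi_j^{\pm}\rangle$ itself.
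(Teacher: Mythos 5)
Your proof is correct and takes essentially the same route as the paper's: both exploit the product structure of $\sigma_k^C$ in Eq.~(\ref{sigmak}) and compute its action in the computational basis, with the non-vanishing cases governed by whether the bit string (including $j_n=0$) is constant outside $C$. Your reorganization---per-basis-state eigenanalysis, then complement-invariance of ``constant on the complement of $C$'', then linearity---is just slightly tidier bookkeeping of the direct computation the paper performs on the superposition $|\Psi_j^{\pm}\rangle$ itself.
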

\begin{proof}
With the notation  introduced, we can write
\begin{equation}
      |\Psi_j^\pm\rangle = \frac{1}{\sqrt{2}} \left(
        \bigotimes_{\ell \leq n} |j_\ell\rangle_\ell
        \pm
        \bigotimes_{\ell \leq n} |\overline{j_\ell}\rangle_\ell
      \right).
\end{equation}
It is then clear that
\begin{multline} \label{eq:sigma_kC_psi}
      \sigma_k^C |\Psi_j^\pm\rangle
      \\=  \frac{1}{2^{k+1}\sqrt{2}} \left(
        \bigotimes_{\ell \notin C} \langle 0| j_\ell\rangle |0\rangle_\ell
        + \bigotimes_{\ell \notin C} \langle 1| j_\ell\rangle |1\rangle_\ell
      \right) \bigotimes_{\ell \in C} |j_\ell\rangle_\ell \\
      \pm \frac{1}{2^{k+1}\sqrt{2}} \left(
        \bigotimes_{\ell \notin C} \langle 0| \overline{j_\ell}\rangle |0\rangle_\ell
        + \bigotimes_{\ell \notin C} \langle 1| \overline{j_\ell}\rangle |1\rangle_\ell
      \right)
      \bigotimes_{\ell \in C} |\overline{j_\ell}\rangle_\ell.
\end{multline}
Therefore, if there are $\ell_1, \ell_2 \notin C$ such that $j_{\ell_1} \neq j_{\ell_2}$ it follows $\sigma_k^C |\Psi_j^\pm\rangle = 0$.
That is, $\sigma_k^C |\Psi_j^\pm\rangle$ can only be non-vanishing if every $j_\ell$ with $\ell \notin C$ have the same value.
If $\ell \notin C$ implies $j_\ell = 0$, then Eq.~\eqref{eq:sigma_kC_psi} implies
\begin{align}
      \sigma_k^C |\Psi_j^\pm\rangle
      &= \frac{1}{2^{k+1}\sqrt{2}} \left(
        \bigotimes_{\ell \notin C} |0\rangle_\ell \bigotimes_{\ell \in C} |j_\ell\rangle_\ell
        \pm \bigotimes_{\ell \notin C} |1\rangle_\ell \bigotimes_{\ell \in C} |\overline{j_\ell}\rangle_\ell
      \right)\nonumber\\
      &= \frac{1}{2^{k+1}} |\Psi_j^\pm\rangle.
\end{align}
Analogously, if $\ell \notin C$ implies $j_\ell = 1$ then $\sigma_k^C |\Psi_j^\pm\rangle = |\Psi_j^\pm\rangle / 2^{k+1}.$
\end{proof}

\begin{lemma}\label{lemmasj}
For $0\leq j < 2^{n-1}$ an integer, define
\begin{equation}\label{sj}
      S_j := \sum_{k=1}^{n-1} p^{n-k} (1-p)^k \sum_{C \in \mathcal{C}(n,k)} \langle \Psi_j^{\pm}| \sigma_k^C |\Psi_j^{\pm}\rangle.
\end{equation}
Then,
\begin{equation}
S_0= \frac{1}{2} \left[ -p^n + \left(\frac{1+p}{2}\right)^n - \left(\frac{1-p}{2}\right)^n \right],
\end{equation}
and for $j\neq0$
    \begin{align}
      S_j &= \frac{1}{2} \left[ \left(\frac{1+p}{2}\right)^n \left(\frac{1-p}{1+p}\right)^{w_H(j)} \right.\nonumber\\
      &+ \left.\left(\frac{1-p}{2}\right)^n \left(\frac{1+p}{1-p}\right)^{w_H(j)} \right] - \left(\frac{1-p}{2}\right)^n,\label{calculationsj}
    \end{align}
where $w_H(j)$ is the Hamming weight of $j$ (i.e.\ the number of 1s in its binary representation).
\end{lemma}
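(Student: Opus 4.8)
The plan is to combine the spectral information from Lemma~\ref{prop:sigma_k_c_eigenvectors} with a combinatorial count and a binomial resummation. By Lemma~\ref{prop:sigma_k_c_eigenvectors}, each matrix element $\langle\Psi_j^\pm|\sigma_k^C|\Psi_j^\pm\rangle$ equals $1/2^{k+1}$ precisely when all the bits $j_\ell$ with $\ell\notin C$ coincide, and vanishes otherwise; crucially this is independent of the sign $\pm$, which is what makes $S_j$ in Eq.~\eqref{sj} well defined. Writing $w=w_H(j)$ and setting $Z=\{\ell\in[n]:j_\ell=0\}$ and $O=\{\ell\in[n]:j_\ell=1\}$ (so $|O|=w$, $|Z|=n-w$, and $n\in Z$ since $j_n=0$), the support condition says that the complement $T=[n]\setminus C$ lies entirely in $Z$ or entirely in $O$. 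For fixed $k$ one has $|T|=n-k\geq1$, so the number of admissible $C$ is $\binom{n-w}{n-k}+\binom{w}{n-k}$, with no double counting because a nonempty $T$ cannot sit inside the two disjoint sets $Z$ and $O$ simultaneously.

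Substituting this count into Eq.~\eqref{sj} gives
\begin{equation}
S_j=\sum_{k=1}^{n-1}\frac{p^{n-k}(1-p)^k}{2^{k+1}}\left[\binom{n-w}{n-k}+\binom{w}{n-k}\right].
\end{equation}
I would then change variables to $m=n-k$ (so $m$ runs from $1$ to $n-1$) and use the identity $p^m(1-p)^{n-m}2^{-(n-m+1)}=\tfrac12\,p^m\left(\tfrac{1-p}{2}\right)^{n-m}$, which turns $S_j$ into one half the sum of two series of the form $\sum_m\binom{a}{m}p^m\left(\tfrac{1-p}{2}\right)^{n-m}$ with $a\in\{n-w,w\}$. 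Each such series I would close with the binomial theorem: pulling out $\left(\tfrac{1-p}{2}\right)^n$ leaves $\sum_m\binom{a}{m}\left(\tfrac{2p}{1-p}\right)^m=\left(\tfrac{1+p}{1-p}\right)^{a}$, which reorganizes to $\left(\tfrac{1+p}{2}\right)^{a}\left(\tfrac{1-p}{2}\right)^{n-a}$.

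The only genuine subtlety is the range of summation, and it is exactly what separates $S_0$ from $S_{j\neq0}$. The binomial theorem sums $0\le m\le a$, whereas $S_j$ sums only $1\le m\le n-1$, so boundary terms must be removed. For $j\neq0$ one has $1\leq w\leq n-1$, so both binomials already vanish beyond $m=n-1$ and only the $m=0$ term, equal to $\left(\tfrac{1-p}{2}\right)^n$ in each series, is deleted; the two closed forms $\left(\tfrac{1+p}{2}\right)^{n-w}\left(\tfrac{1-p}{2}\right)^{w}$ and $\left(\tfrac{1+p}{2}\right)^{w}\left(\tfrac{1-p}{2}\right)^{n-w}$ rewrite (factoring out $\left(\tfrac{1+p}{2}\right)^n$ and $\left(\tfrac{1-p}{2}\right)^n$ respectively) as the two bracketed terms of Eq.~\eqref{calculationsj}, while the two deleted boundary terms combine through the overall factor $1/2$ into the final summand $-\left(\tfrac{1-p}{2}\right)^n$. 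For $j=0$ one has $w=0$: the $\binom{0}{m}$ series vanishes identically, while the surviving $\sum_m\binom{n}{m}p^m\left(\tfrac{1-p}{2}\right)^{n-m}$ would equal $\left(\tfrac{1+p}{2}\right)^n$ over $0\le m\le n$, so now both the $m=0$ term $\left(\tfrac{1-p}{2}\right)^n$ and the $m=n$ term $p^n$ lying outside $m\le n-1$ must be subtracted. This extra $-p^n$ is precisely what produces the stated $S_0$ after the factor $1/2$.

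I expect the main obstacle to be purely bookkeeping: translating the eigenvector support condition of Lemma~\ref{prop:sigma_k_c_eigenvectors} into the count $\binom{n-w}{n-k}+\binom{w}{n-k}$, and tracking which endpoints of the binomial expansion fall inside the range $1\le m\le n-1$, since a single misplaced boundary term would break the match with the claimed closed forms.
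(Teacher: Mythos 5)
Your proposal is correct and follows essentially the same route as the paper: invoke Lemma~\ref{prop:sigma_k_c_eigenvectors} to reduce $S_j$ to a weighted count of admissible subsets $C$, identify that count as $\binom{n-w_H(j)}{n-k}+\binom{w_H(j)}{n-k}$ (the paper writes the equivalent $\binom{n-w_H(j)}{k-w_H(j)}+\binom{w_H(j)}{k+w_H(j)-n}$), and close the sums with the binomial theorem after an index shift, with the $j=0$ versus $j\neq0$ distinction arising exactly from which boundary terms ($m=0$, and additionally $m=n$ when $w_H(j)=0$) fall outside the summation range. Your treatment of the boundary terms makes explicit what the paper compresses into ``calculated using the binomial theorem by shifting the sum index appropriately.''
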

\begin{proof}
From Lemma \ref{prop:sigma_k_c_eigenvectors} it follows that
\begin{equation}\label{sj2}
      S_j = \frac{1}{2} \sum_{k=1}^{n-1} \#(j,k) p^{n-k} \left(\frac{1-p}{2}\right)^k,
\end{equation}
where $\#(j,k)$ is the number of elements in $\{C \in \mathcal{C}(n,k) \mid \langle \Psi_j^{\pm}| \sigma_k^C |\Psi_j^{\pm}\rangle \neq 0\}$; that is, the number of non-vanishing terms in the sum $\sum_{C \in \mathcal{C}(n,k)} \langle \Psi_j^{\pm}| \sigma_k^C |\Psi_j^{\pm}\rangle$. Let $C \in \mathcal{C}(n,k)$, $j_1 j_2 \dots j_{n-1}$ the binary expression of $j$ and $j_n = 0$ in order to use the same notation as above.
Given $j$, Lemma~\ref{prop:sigma_k_c_eigenvectors} tells us $\#(j,k)$ is the number of elements $C \in \mathcal{C}(n,k)$ such that for every $\ell_1, \ell_2 \notin C$ it holds $j_{\ell_1} = j_{\ell_2}$. For $i=0,1$, let us denote by $\#_i(j,k)$ the number of elements $C \in \mathcal{C}(n,k)$ such that $j_\ell = i$ implies $\ell \in C$. By definition, it holds $\#(j,k) = \#_1(j,k) + \#_0(j,k)$. It follows now that $\#_1(j,k) = 0$ whenever $k < w_H(j_1 j_2 \dots j_{n-1}0)=w_H(j)$. Otherwise, $\#_1(j,k)$ must be the number of different subsets of $[n]$, with size $k$, containing the positions in which $j_1 j_2 \dots j_{n-1}0$ has ones. Since these $w_H(j)$ positions have to be assigned to $C$ always, $\#_1(j,k)$ is given by the different ways in which the remaining $n-w_H(j)$ positions can fill the remaining $k-w_H(j)$ slots in $C$, i.e.\
\begin{equation}
      \#_1(j,k) = \binom{n-w_H(j)}{k-w_H(j)}.
\end{equation}
On the other hand, for $\#_0(j,k)$ the $n-w_H(j)$ positions where $j_1 j_2 \dots j_{n-1}0$ has a zero must always be assigned to $C$ and, therefore, $\#_0(j,k)$ is given by the different ways in which the remaining $w_H(j)$ positions can fill the remaining $k-(n-w_H(j))$ slots in $C$, i.e.\
\begin{equation}
      \#_0(j,k) = \binom{w_H(j)}{k + w_H(j) - n}.
\end{equation}
Using the results in the previous discussion, for $j \neq 0$, it follows that
\begin{align}
      S_j &=
      \frac{1}{2} \sum_{k=w_H(j)}^{n-1} \binom{n-w_H(j)}{k-w_H(j)} p^{n-k} \left(\frac{1-p}{2}\right)^k\nonumber\\
      &+ \frac{1}{2} \sum_{k=n-w_H(j)}^{n-1} \binom{w_H(j)}{k + w_H(j) - n} p^{n-k} \left(\frac{1-p}{2}\right)^k,
\end{align}
while
\begin{equation}
      S_0 = \frac{1}{2} \sum_{k=1}^{n-1} \binom{n}{k} p^{n-k} \left(\frac{1-p}{2}\right)^k.
\end{equation}
All these sums can be calculated using the binomial theorem by shifting the sum index appropriately, yielding the desired result.
\end{proof}

\begin{lemma}\label{lemma:ppt}
Let $M$ be a fixed subset of parties. Then, for any fixed $p<1$ there is $n_0\in\mathbb{N}$ such that $(\mathcal{T}_p)^{\otimes n}(\ghz)$ is PPT in the bipartition $M|\overline{M}$ for all $n\geq n_0$.
\end{lemma}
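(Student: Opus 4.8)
The plan is to use the fact that $(\mathcal{T}_p)^{\otimes n}(\ghz)$ is diagonal in the basis $\{|\Psi_j^\pm\rangle\}$ and that almost all of its coherences in the computational basis vanish, so that the partial transpose collapses to a single $2\times2$ block. By Lemma~\ref{prop:sigma_k_c_eigenvectors} each $\sigma_k^C$ is diagonal in $\{|\Psi_j^\pm\rangle\}$, the identity is diagonal in any basis, and $\ghz=|\Psi_0^+\rangle\langle\Psi_0^+|$; hence from Eq.~(\ref{teleportedghz}) the state is diagonal in this basis with eigenvalues $\lambda_0^+=p^n+S_0+(1-p)^n/2^n$ and $\lambda_j^\pm=S_j+(1-p)^n/2^n$ for every other $(j,\pm)$, where $S_j$ is given by Lemma~\ref{lemmasj}. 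Since $S_j$ does not depend on the sign, $\lambda_j^+=\lambda_j^-$ for all $j\neq0$. Writing $|\Psi_j^\pm\rangle=\tfrac{1}{\sqrt2}(|\vec{j}\rangle\pm|\overline{\vec{j}}\rangle)$ and passing to the computational basis (the change of basis is block-diagonal over the pairs $\{\vec{j},\overline{\vec{j}}\}$, so no cross-pair coherences arise), the only surviving off-diagonal element is the one in the $j=0$ pair $\{|0\rangle^{\otimes n},|1\rangle^{\otimes n}\}$, of magnitude $(\lambda_0^+-\lambda_0^-)/2=p^n/2$. This is just the statement that $\mathcal{T}_p$ is the qubit depolarizing channel, which suppresses the GHZ coherence by $p^n$ while leaving a diagonal equal to the distribution of a bit-flip random walk.

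Assume $M$ and $\overline{M}$ are both non-empty (otherwise the partial transpose is globally positive). First I would record the diagonal entries: a computational string $\vec{x}$ of Hamming weight $w$ carries weight $d_{\vec{x}}=\tfrac12[(\tfrac{1+p}{2})^{n-w}(\tfrac{1-p}{2})^{w}+(\tfrac{1+p}{2})^{w}(\tfrac{1-p}{2})^{n-w}]$, which is invariant under $w\mapsto n-w$ (this is immediate from the depolarizing picture, or by re-expressing the $\lambda_j^\pm$ in the computational basis). Next I would transpose with respect to $\overline{M}$: the diagonal is unchanged, while the single coherence $\tfrac{p^n}{2}(|0\rangle^{\otimes n}\langle1|^{\otimes n}+\mathrm{h.c.})$ is mapped to a coherence of the same magnitude between the strings $0_M1_{\overline M}$ (all zeros on $M$, all ones on $\overline{M}$) and $1_M0_{\overline M}$. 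These two strings are complementary, hence a single pair, so the result is again block-diagonal: every block is one-dimensional and non-negative except a single $2\times2$ block supported on $\{0_M1_{\overline M},\,1_M0_{\overline M}\}$, whose off-diagonal entries are $p^n/2$.

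Positivity of the whole partial transpose then reduces to this one block. Its two diagonal entries sit at Hamming weights $n-m$ and $m$ with $m=|M|$, and by the reflection symmetry above they coincide in a common value $D$; the block therefore has eigenvalues $D\pm p^n/2$, and PPT is equivalent to the single inequality $D\geq p^n/2$. It thus remains to compare $D=\tfrac12[(\tfrac{1+p}{2})^{n-m}(\tfrac{1-p}{2})^{m}+(\tfrac{1+p}{2})^{m}(\tfrac{1-p}{2})^{n-m}]$ with $p^n/2$. Dropping the second positive summand gives $D\geq\tfrac12(\tfrac{1+p}{2})^{n-m}(\tfrac{1-p}{2})^{m}$, so $\frac{D}{p^n/2}\geq(\tfrac{1+p}{2p})^{n-m}(\tfrac{1-p}{2p})^{m}$. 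Since $p<1$ implies $\tfrac{1+p}{2p}>1$ and $m$ is fixed, the right-hand side diverges as $n\to\infty$, giving an $n_0$ with $D\geq p^n/2$, hence $(\mathcal{T}_p)^{\otimes n}(\ghz)$ PPT across $M|\overline{M}$, for all $n\geq n_0$.

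The conceptual crux, which I expect to carry the weight of the argument, is the first step: recognizing that depolarizing noise annihilates every coherence of the GHZ state except the global one, so that the partial transpose acquires an essentially trivial structure. Once this is in place the rest is a short $2\times2$ positivity check, and the asymptotics hinge entirely on the elementary inequality $\tfrac{1+p}{2}>p$ for $p<1$, i.e.\ that the noise degrades the surviving coherence $p^n/2$ strictly faster than it degrades the diagonal weight $D$.
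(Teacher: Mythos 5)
Your proof is correct, and it arrives at exactly the paper's critical inequality, but via a different and leaner decomposition. The paper stays in the $\{|\Psi_j^\pm\rangle\}$ basis throughout: it applies the partial transposition directly to Eq.~(\ref{teleportedghz}), notes that only the GHZ term moves (so $\{|\Psi_j^\pm\rangle\}$ remains an eigenbasis, the transposed coherence contributing $\pm\delta_{jr}$ for the string $r$ that indicates $M$), and then needs the full combinatorial formula for $S_j$ from Lemma~\ref{lemmasj} to write down \emph{every} eigenvalue and verify that only the one attached to $|\Psi_r^-\rangle$ can be negative. You instead pass to the computational basis, where the product structure of the depolarizing channel, $\mathcal{T}_p(|i\rangle\langle j|)=p|i\rangle\langle j|+(1-p)\delta_{ij}\one/2$, exhibits the state as a manifestly non-negative diagonal matrix plus the single coherence $\tfrac{p^n}{2}(|0\rangle^{\otimes n}\langle 1|^{\otimes n}+\mathrm{h.c.})$; after transposing, positivity collapses with no combinatorics to one $2\times 2$ block with equal diagonal entries $D$ and off-diagonal $p^n/2$. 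This is the same mathematics in different clothes: your block's eigenvectors are precisely the paper's $|\Psi_r^\pm\rangle$, and your condition $D\geq p^n/2$, secured by the divergence of $(\tfrac{1+p}{2p})^{n-m}(\tfrac{1-p}{2p})^{m}$, is literally the paper's final display. What your route buys is that positivity of all remaining eigenvalues is immediate (they are diagonal entries of a mixture of product states), so the $S_j$ computation of Lemma~\ref{lemmasj} --- which the paper proves essentially only to serve this lemma --- becomes dispensable; what it costs is nothing beyond one remark: for $p=0$ your ratio $D/(p^n/2)$ is undefined, but there the coherence vanishes and PPT is trivial (the paper's step $(1+p)/(2p)>1$ carries the same implicit restriction to $p>0$).
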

\begin{proof}
Given $M$, let $n$ be large enough so that $M\subsetneq[n]$. Without loss of generality we can assume that $n\notin M$ and let $r$ be the integer in $(0,2^{n-1})$ such that in its binary representation $r_1 r_2 \dots r_{n-1}$ it holds that $r_j=1$ if $j\in M$ and $r_j=0$ otherwise. Notice that
\begin{align}
\Gamma_{M}[(\mathcal{T}_p)^{\otimes n}(\ghz)] &= p^n \Gamma_{M}(\ghz) \nonumber\\
&+ \sum_{k=1}^{n-1} p^{n-k} (1-p)^k \sum_{C \in \mathcal{C}(n,k)} \sigma_k^C\nonumber \\ 
&+ (1-p)^n \frac{\one}{2^n},
\end{align}
with
\begin{align}
\Gamma_{M}(\ghz) &= \frac{1}{2} \left(|0 \dots 0 \rangle\langle 0 \dots 0 | + |1 \dots 1 \rangle\langle 1 \dots 1 |\right.\nonumber \\ 
&\left.+ |r0\rangle\langle \overline{r}1| + |\overline{r}1\rangle\langle r0|\right).
\end{align}
Thus, $\{|\Psi_j^{\pm}\rangle \}$ is also an orthonomal basis of eigenvectors of $\Gamma_{M}[(\mathcal{T}_p)^{\otimes n}(\ghz)]$ and
\begin{align}
\langle \Psi_j^\pm | \Gamma_{M}[(\mathcal{T}_p)^{\otimes n}(\ghz)] | \Psi_j^\pm \rangle
      &= \frac{p^n}{2}(\delta_{j0}\pm\delta_{jr})\nonumber\\ &+S_j+ \left(\frac{1-p}{2}\right)^n,
\end{align}
where $S_j$ is given by Eq.\ (\ref{sj}). Therefore, using Lemma~\ref{lemmasj} we obtain that
\begin{widetext}
\begin{equation}
      \langle \Psi_j^\pm | \Gamma_{M}[(\mathcal{T}_p)^{\otimes n}(\ghz)] | \Psi_j^\pm \rangle
      = \frac{p^n}{2} \left[\pm \delta_{jr} + \left(\frac{1+p}{2p}\right)^n \left(\frac{1-p}{1+p}\right)^{w_H(j)} + \left(\frac{1-p}{2p}\right)^n \left(\frac{1+p}{1-p}\right)^{w_H(j)}\right].
\end{equation}
\end{widetext}
Consequently, all eigenvalues of $\Gamma_{M}[(\mathcal{T}_p)^{\otimes n}(\ghz)]$ are positive with the only possible exception of the eigenvalue associated with $|\Psi_r^-\rangle$. However, since $w_H(r)=|M|$ is fixed and $(1+p)/(2p)>1$ for any $p\in(0,1)$, we have that for any given $p$ and sufficiently large $n$
\begin{equation}
\left(\frac{1+p}{2p}\right)^n \left(\frac{1-p}{1+p}\right)^{w_H(r)}>1,
\end{equation}
making the eigenvalue corresponding to $|\Psi_r^-\rangle$ positive as well. This proves the claim.
\end{proof}

\begin{proof}[Proof of Theorem~\ref{th:noghzdistillation}]
For any given pair of parties, the state $\ghz$ can be transformed by LOCC into a bipartite maximally entangled state (see e.g.\ \cite{DurCirac2000}). Thus, by continuity of linear maps and the fidelity, if we assume that $(\mathcal{T}_p)^{\otimes n}(\ghz)$ is GHZ-distillable for $n$ sufficiently large for some given $p<1$, it follows that $(\mathcal{T}_p)^{\otimes n}(\ghz)$ cannot be PPT for $n$ sufficiently large for some given $p<1$ in any bipartion. This is, however, in contradiction with Lemma~\ref{lemma:ppt}.
\end{proof}

\section{Partial distillabillity for sufficiently connected networks}\label{sec3}

In this section we move on to study the main question posed in this work. We begin by showing that partial distillability is indeed possible in isotropic quantum networks if the connectivity grows sufficiently fast as the network becomes bigger. To wit, it is sufficient that both the minimal degree and the edge-connectivity of the associated graph sequence $(G_n)_{n\in\mathbb{N}}$ grow with $n$, provided the former does so fast enough, i.e.\ linearly with the order. This is the content of Theorem \ref{th:sufficient} below. After stating and proving this theorem, we consider in turn necessary conditions for partial distillability in the next subsection. Theorem \ref{th:necessary} proves that in fact $\lambda(G_n)$ must diverge in order to have partial distillability. While, given that $\dmin(G)\geq\lambda(G)$ for every graph $G$, this implies that the minimal degree must diverge as well, this still leaves open whether the linear growth condition of Theorem \ref{th:sufficient} happens to be necessary as well or whether partial distillability is possible with a slower degree growth. We close this section by showing in Theorem \ref{th:slow} that the latter case holds. In fact, we prove therein that partial distillabillity is possible with degree growth at any fractional power of choice by explicitly constructing graph sequences with these properties. This is particularly relevant because the minimal degree can be linked to the overall cost of preparing the network. It provides a lower bound to the number of entangled links that need to be established and it gives the minimal local dimension of the isotropic network state and, therefore, the minimal local Hilbert space each party has to support and control. In this sense, the particular constructions used in Theorem \ref{th:slow} provide according to this figure of merit the cheapest instances of networks with partial distillability that we have been able to find.

\subsection{A sufficient condition for partial distillability}\label{sec3suf}

\begin{theorem}\label{th:sufficient}
If $(G_n)_{n\in\mathbb{N}}$ is a graph sequence such that $\dmin(G_n)$ is $\Omega(|G_n|)$ and $\lambda(G_n)$ is $\omega(1)$ (i.e.\ $\lim_{n\to\infty}\lambda(G_n)=\infty$), then it has partial distillability.
\end{theorem}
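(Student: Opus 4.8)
The plan is to reduce the distillation of an arbitrary $m$-partite target $\psi$ on a subset $V_0=\{w_1,\dots,w_m\}$ to the establishment of a few high-fidelity bipartite maximally entangled links, and then to build those links by bipartite distillation fed by many short teleportation paths. Concretely, I would let $w_1$ locally prepare the full target $\psi\in D(\mathcal{H}'_{V_0})$ in its laboratory and then teleport the share belonging to $w_j$ to party $w_j$, for each $j=2,\dots,m$, consuming a bipartite state $\tau_j^{(n)}$ shared between $w_1$ and $w_j$. Since teleportation is a fixed LOCC map and both it and the fidelity are continuous, if each $\tau_j^{(n)}$ satisfies $F(\tau_j^{(n)},\phi^+)\to1$ as $n\to\infty$, then the overall LOCC output converges in fidelity to $\psi$; this is the same continuity argument used in the proof of Theorem~\ref{th:noghzdistillation}. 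The whole problem therefore collapses to producing, between any prescribed pair of parties and from a single copy of $\sigma(G_n,p)$, a distilled state arbitrarily close to $\phi^+$, with a visibility threshold that does not depend on $m$, $V_0$ or $\psi$.

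The core step is thus to distill $\phi^+$ between two arbitrary vertices $u,v$ with fidelity tending to $1$, and for this I would combine two ingredients. First, a graph-theoretic lemma: under the hypotheses there are $\omega(1)$ edge-disjoint $u$--$v$ paths of length at most some constant $L$ depending only on the implied constant of $\dmin(G_n)=\Omega(|G_n|)$. I would prove it greedily, maintaining a growing family of edge-disjoint short paths and deleting, at each stage, the edges used so far. If $t$ paths have been built then at most $Lt$ edges are deleted, so every vertex loses at most $2t$ incident edges; hence while $t\lesssim\dmin(G_n)/4$ the residual graph still has minimal degree $\Omega(|G_n|)$, and the component containing $u$ and $v$ therefore has bounded diameter, furnishing a further $u$--$v$ path of length $\le L$. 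The edge-connectivity enters precisely here: since the deleted edges form $t$ edge-disjoint $u$--$v$ paths, $u$ and $v$ remain connected in the residual graph as long as $t<\lambda(G_n)$. Iterating yields at least $\min(\lambda(G_n),\,\Omega(|G_n|))$ short edge-disjoint paths, a quantity diverging because $\lambda(G_n)\to\infty$ and $|G_n|\to\infty$. Second, teleporting over each such path of length $\ell\le L$ converts it, by Eq.~(\ref{eq:teleportation-path}), into $\rho_{(u,v)}(p^{2^{\ell-1}})$, of visibility at least $p^{2^{L-1}}$; by locally mixing with white noise I equalize all of them to the fixed state $\rho(p^{2^{L-1}})$. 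Taking $p_0=(d+1)^{-1/2^{L-1}}$, for every $p>p_0$ this isotropic state is entangled and hence distillable, so feeding the diverging number of copies into a bipartite distillation protocol produces a state whose fidelity with $\phi^+$ tends to $1$.

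Finally I would assemble the pieces and verify uniformity of the threshold. The $m-1$ links must be produced from mutually edge-disjoint resources, but since $m$ and $L$ are fixed while $\min(\lambda(G_n),\,\Omega(|G_n|))\to\infty$, I can allocate a slowly diverging number $t_n\to\infty$ of short paths to each of the $m-1$ pairs while keeping the total number of consumed edges, $(m-1)Lt_n$, below both the degree budget $\sim\dmin(G_n)/4$ and $\lambda(G_n)$; this preserves the minimal-degree (hence short-path) guarantee and the pairwise connectivity throughout the construction. The resulting threshold $p_0$ depends only on $L$ and $d$, and $L$ depends only on the graph sequence, exactly as Definition~\ref{partialdist} requires.

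The main obstacle is the graph-theoretic lemma, specifically the requirement that the paths be simultaneously many and short. Menger's theorem alone supplies $\lambda(G_n)\to\infty$ edge-disjoint paths but no control on their length, and a single long teleportation path drives the visibility below the distillability threshold $1/(d+1)$; conversely the linear minimal degree alone bounds the diameter, yet a single shortest path cannot be distilled to fidelity $1$ from one copy. That both hypotheses are genuinely needed is illustrated by two cliques of order $\sim|G_n|/2$ joined by a single edge, which has $\dmin=\Omega(|G_n|)$ yet $\lambda=1$ and no good cross-link; the delicate point is therefore to check that deleting the edges of previously chosen paths degrades neither the diameter bound nor the $u$--$v$ connectivity until the number of short paths has already diverged.
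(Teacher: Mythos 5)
Your overall strategy coincides with the paper's: reduce the task to establishing near-perfect $\phi^+$ links between one distinguished vertex of $V_0$ and each of the others, obtain each link by packing many \emph{short} edge-disjoint paths (short because the residual minimum degree stays linear, hence the diameter stays bounded --- the bound $\diam(G)\leq 3|G|/\dmin(G)$ you use implicitly is the result of Erd\"os--Pach--Pollack--Tuza that the paper cites), teleport along each path to get isotropic states of visibility at least $p^{2^{L-1}}$, equalize and distill, and finally have the distinguished vertex prepare $\psi$ and teleport its shares; your threshold $p_0=(d+1)^{-1/2^{L-1}}$ is the paper's, and the paper merely packages the path families as ``spider subgraphs'' (its Lemma~\ref{lemma:spider-existence}). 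However, one step of your core lemma is genuinely false as stated: the claim that ``since the deleted edges form $t$ edge-disjoint $u$--$v$ paths, $u$ and $v$ remain connected in the residual graph as long as $t<\lambda(G_n)$.'' Edge-connectivity controls the number of removed \emph{edges}, not the number of removed edge-disjoint paths: a single path can consume many edges of a minimum cut. Concretely, take the $4$-cycle on $u,a,v,b$ with the chord $(a,b)$; then $\lambda=2$, yet deleting the single path $(u,a),(a,b),(b,v)$ disconnects $u$ from $v$. So Menger's theorem does not deliver what you assert, and your induction, as justified, can break down well before $t$ reaches $\lambda(G_n)$.

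The repair is exactly the criterion you yourself switch to in your final assembly paragraph: budget edges, not paths. Since each of your paths has length at most $L$, after $t$ rounds at most $Lt$ edges have been removed, and $Lt<\lambda(G_n)$ guarantees the residual graph is connected (this is how the paper argues, comparing the total number of removed edges, at most $5m|V_0|/c$, with $\lambda(G)$). This changes your count of short edge-disjoint paths from $\min\{\lambda(G_n),\,\Omega(|G_n|)\}$ to roughly $\min\{\lambda(G_n)/L,\,\Omega(|G_n|)\}$, which still diverges because $L$ is a constant depending only on the implied constant $c$ of $\dmin(G_n)=\Omega(|G_n|)$. With this local fix, the rest of your argument (equalizing visibilities by adding noise, distilling the diverging number of isotropic copies, teleporting $\psi$, and the uniformity of $p_0$ in $m$, $V_0$ and $\psi$) goes through unchanged and reproduces the paper's proof.
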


The main idea behind the proof of this theorem is taken from \cite{AGME1,AGME2} using the sequential noisy teleportation procols described in Sec.\ \ref{subsec:noisy_teleport}. Given the set $V_0$ of $m$ parties where we aim to distill an $m$-partite state $\psi$, we will show that, for some choice of $v_0\in V_0$, $G_n$ contains a growing number of edge-disjoint spider subgraphs all of them having $v_0$ at its center and the remaining vertices in $V_0$ at the end points of its legs. Then, the isotropic edges can be sequentially teleported from the center through the legs of the spiders so that at the end $v_0$ shares a growing number of noisier isotropic states with each of the other elements in $V_0$. If all the isotropic states in each of these chunks have visibility large enough, then they can be distilled to a maximally entangled state with fidelity arbitrarily close to 1 as $n$ grows. The key observation is that this is ensured for some fixed $p<1$ if the lengths of all legs of all spiders are uniformly bounded. Then, $v_0$ can locally prepare $\psi$ and use the states obtained by distillation to teleport it to the remaining parties with fidelity arbitrarily close to 1 as $n$ grows. 

We move now to the proof of Theorem \ref{th:sufficient}, for which we establish first the following lemma that proves the existence of the spider subgraphs by generalizing the ideas from Proposition~5.2 in \cite{AGME2}.

\begin{lemma} \label{lemma:spider-existence}
Let $G=(V,E)$ be a connected graph with $\dmin(G)>1$ and let $c=\dmin(G)/|G|\in(0,1)$. For every subset $V_0 \subset V$ and any vertex $v \in V_0$, there exist at least $\lfloor \min\{\dmin(G), c\lambda(G)\}/5|V_0|\rfloor$ edge-disjoint spider subgraphs centred in $v$, and having a leg ending on each vertex in $V_0\setminus\{v\}$, each with length upper bounded by $5/c$.
\end{lemma}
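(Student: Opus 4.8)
The plan is to reduce the construction of the spiders to repeatedly extracting \emph{short} paths from $v$ to the individual targets $u\in V_0\setminus\{v\}$ and bundling them, while carefully tracking how many edges each vertex has lost. Two engines drive the argument. The first is a short-distance guarantee furnished by the minimal degree: the classical bound that a connected graph with minimal degree $\delta$ has diameter at most $3|G|/(\delta+1)$ implies that in any connected subgraph of $G$ whose minimal degree stays above $\tfrac{3}{5}\dmin(G)$ the distance between any two vertices is at most $3|G|/(\tfrac{3}{5}\dmin(G))=5/c$. The second is a connectivity guarantee furnished by the edge-connectivity: deleting strictly fewer than $\lambda(G)$ edges leaves the graph connected, and Menger's theorem supplies the abundance of routes needed. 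Both the constant $5$ and the factor $|V_0|$ appearing in the statement will emerge from the bookkeeping.

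Writing $\mu:=\min\{\dmin(G),c\lambda(G)\}$ and $T:=\lfloor\mu/(5|V_0|)\rfloor$, I would build the $T$ spiders one by one and, within each spider, build its at most $|V_0|-1$ legs one target at a time, each time taking a shortest $v$--$u$ path in $G$ after deleting all previously used edges and forbidding the interior vertices of the legs already built for the \emph{current} spider; forbidding those interiors is exactly what guarantees that each finished spider is a genuine tree with a single high-degree vertex, as required. The governing estimate is a budget count: each spider uses at most $(|V_0|-1)\cdot 5/c$ edges, so all $T$ spiders together use at most $T(|V_0|-1)\cdot 5/c<\mu/c\le\lambda(G)$ edges. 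Consequently fewer than $\lambda(G)$ edges are ever deleted, so at every stage the residual graph is still connected and the required $v$--$u$ path exists; distinct spiders are edge-disjoint by construction.

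The delicate point, which I expect to be the main obstacle, is ensuring that the extracted paths remain \emph{short} (length at most $5/c$) at every stage, since both the edge deletions and the forbidden interior vertices can inflate distances. My approach would be to maintain the invariant that no vertex ever loses more than $\tfrac{2}{5}\dmin(G)$ of its edges: because the total degree loss equals twice the number of deleted edges and is therefore at most $2\lambda(G)\le 2\dmin(G)$, fewer than five vertices can ever become heavily depleted, and these, together with the $O(|V_0|/c)$ interior vertices forbidden inside the current spider, form a set that is negligible against $|G|$. Away from this small set the residual minimal degree stays above $\tfrac{3}{5}\dmin(G)$, so the first engine bounds every shortest residual $v$--$u$ path by $5/c$. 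Making this simultaneous control of short distance and connectivity under combined edge- and vertex-deletions fully rigorous --- in particular verifying that the small forbidden set does not disconnect $v$ from $u$ --- is where the real work lies.

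It remains to check that the greedy extraction never stalls at the endpoints, and this is where the $\dmin(G)$ term of $\mu$ enters: the centre $v$ is an endpoint of all $T(|V_0|-1)<\mu/5\le\dmin(G)$ legs and so retains most of its edges, while each $u$ is an endpoint of only $T\le\dmin(G)/(5|V_0|)$ legs; the $c\lambda(G)$ term, by contrast, was spent on the connectivity budget above. Grouping, in each of the $T$ rounds, the $|V_0|-1$ internally vertex-disjoint short legs into one spider centred at $v$ then yields the claimed $\lfloor\min\{\dmin(G),c\lambda(G)\}/5|V_0|\rfloor$ edge-disjoint spiders with all legs of length at most $5/c$.
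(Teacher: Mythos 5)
The step you explicitly defer --- keeping every extracted path of length at most $5/c$ --- is exactly where your argument breaks, and the device you propose for it (excising a small ``bad'' set of heavily depleted vertices together with the forbidden interiors) cannot be repaired. The bound $\diam(G)\le 3|G|/\dmin(G)$ requires a degree lower bound for \emph{every} vertex of the graph to which it is applied: once you pass to the subgraph induced on the ``good'' vertices, a surviving vertex may have lost up to (size of the excised set) additional edges, and that set has size of order $|V_0|\cdot 5/c=5|V_0||G|/\dmin(G)$, which is not negligible compared with $\dmin(G)$ under the lemma's hypotheses (take $\dmin(G)\sim\sqrt{|G|}$); moreover, connectedness after deleting \emph{vertices} is not controlled by $\lambda(G)$ at all. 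The correct accounting --- and this is what the paper does --- is per path rather than per edge: each removed leg is a \emph{simple} path, so its removal decreases the degree of any single vertex by at most $2$, and the total number of legs removed in the whole process is at most $T|V_0|\le \mu/5\le\dmin(G)/5$ (with your notation $\mu=\min\{\dmin(G),c\lambda(G)\}$). Hence \emph{every} vertex of the residual graph retains degree at least $\tfrac{3}{5}\dmin(G)$, with no exceptional set whatsoever; combined with your (correct) observation that fewer than $\mu/c\le\lambda(G)$ edges are ever deleted, the whole residual graph stays connected, and the Erd\H{o}s--Pach--Pollack--Tuza bound applies to it directly, giving diameter below $5/c$ at every stage. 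With this replacement your induction closes and coincides with the paper's proof.

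The second problem is your insistence on forbidding the interior vertices of previously built legs so that each spider is a genuine tree: this is not only unprovable by these methods (edge-connectivity gives no Menger-type guarantee for internally vertex-disjoint paths), it makes the conclusion false. Glue two copies of $K_n$ at a single vertex $w$: then $\dmin(G)=\lambda(G)=n-1$ and $c\approx 1/2$, but if $v$ lies in one clique and $u_1,u_2\in V_0$ in the other, every $v$--$u_i$ path passes through $w\neq v$, so any tree containing such legs branches at a vertex other than $v$; not a single spider tree centred at $v$ exists, while the count in the statement promises $\Theta(n)$ of them. What the paper's construction actually produces (and all that its application in Theorem~\ref{th:sufficient} needs) is a union of pairwise \emph{edge}-disjoint short paths from $v$ to the targets, which may freely share interior vertices. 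So drop the vertex-forbidding entirely: take each leg to be a shortest path in the graph obtained by deleting only the previously used edges, and run the per-path degree count above.
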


\begin{proof}
Throughout the proof we use the result of \cite{ErdosPachPollackEtAl1989} that if $G$ is connected and $\dmin(G)>1$, then
\begin{equation}\label{eq:diameter-bound}
\diam(G) \leq \frac{3|G|}{\dmin(G) + 1} - 1
\leq \frac{3|G|}{\dmin(G)},
\end{equation}
where we use the right-most weaker estimate to simplify calculations. Let $v_1, v_2, \dots, v_{|V_0| - 1}$ be a labelling of the vertices in $V_0 \setminus \{v\}$ such that the sequence of distances $\{d_G(v, v_k)\}$ is non-decreasing.
We denote
\begin{equation}\label{dfinitionM}
M := \left\lfloor \frac{\min\{\dmin(G), c\lambda(G)\}}{5|V_0|} \right\rfloor,
\end{equation}
and we assume that $M\geq1$ since otherwise the claim is trivially true. For each $1 \leq m \leq M$, we will construct the spider graph $S_m$ out of $|V_0| - 1$ paths $P_{m,k}$, $1 \leq k \leq |V_0| - 1$ joining $v$ with $v_k$. To find these paths we use an inductive construction based on a family of subgraphs $G_{m,k}$, with $G_{1,1}=G$. If $G_{m,k}$ is connected, we define $P_{m,k}$ to be the shortest path inside $G_{m,k}$ joining $v$ with $v_k$. We then define $G_{m,k+1}$ to be the subgraph of $G_{m,k}$ obtained by removing all the edges in $P_{m,k}$. Once the last leg of the spider is obtained (i.e., the path $P_{m,|V_0| - 1}$), we define $S_m$ to be the union of every path in $\{P_{m,k}\}_{k=1}^{|V_0|-1}$ and start building the next spider graph, starting over from $G_{m+1,1} = G_{m,|V_0|}$. Notice then that by construction the spider graphs are guaranteed to be edge-disjoint. 

For the previous construction to work and to prove the claim, we need to ensure for all $1 \leq m \leq M$ and all $1 \leq k \leq |V_0| - 1$ that $G_{m,k}$ is connected and that the length of the path $P_{m,k}$ is upper bounded by $5/c$. We know that $G_{1,1}=G$ is connected by hypothesis, and, by Eq.\ (\ref{eq:diameter-bound}), that $\diam(G_{1,1})\leq3/c<5/c$, so that $P_{1,1}$ fulfills the desired bound on its length. Thus, we can use induction over $j$, where we use the bijective mapping $j(k,m):=(m-1)|V_0|+k$, assuming that $1\leq j\leq M|V_0|-1$ (so that $1 \leq m \leq M$ and $1 \leq k \leq |V_0| - 1$). We have just verified that the first element in this set satisfies our claim, so we now assume that $G_{m,k}$ is connected and that the length of $P_{m,k}$ is upper bounded by $5/c$ and we need to show that the same holds for the graph indexed by $j(k,m)+1$ as long as $j\leq M|V_0|-2$. Note that $G_{m,k}$ is built by removing a simple path from $G_{m,k-1}$ (or $G_{m-1,|V_0|-1}$ if $k=1$), and therefore in each step the minimal degree of $G_{m,k}$ can at most be decreased by 2 from the previous step, which implies
\begin{align}
\dmin(G_{m,k}) &\geq \dmin(G) - 2(m-1)|V_0| - 2k +2\nonumber\\
&=\dmin(G)-2j+2\nonumber\\
&\geq \dmin(G) - 2M|V_0|+6\nonumber\\
&\geq\frac{3}{5}\dmin(G)+6\geq6,\label{dminGmk}
\end{align}
where we have used that $j\leq M|V_0|-2$ and Eq.\ (\ref{dfinitionM}). Hence, by the hypothesis that $G_{m,k}$ is connected and the above inequality, we can apply the bound in Eq.~\eqref{eq:diameter-bound} to  $G_{m,k}$ and obtain
\begin{equation}\label{diamGmk}
\diam(G_{m,k})\leq \frac{3|G_{m,k}|}{\dmin(G_{m,k})} \leq \frac{3|G|}{\frac{3}{5}\dmin(G)+4} < \frac{5}{c}.
\end{equation}
Since the diameter of all graphs indexed by any value smaller than $j(k,m)$ can only be smaller than $\diam(G_{m,k})$, the number of edges removed from $G$ in order to obtain $G_{m,k+1}$ must be smaller than
\begin{equation}
        m (|V_0| - 1) \diam(G_{m,k}) < \frac{5m|V_0|}{c} \leq \lambda(G),
\end{equation}
where we have used $5m|V_0| \leq c \lambda(G)$, which follows from the definition of $M$. Thus, since the amount of edges removed from $G$ is smaller than its edge-connectivity, $G_{m,k+1}$ must be connected. From Eq.\ (\ref{dminGmk}) we find that 
\begin{equation}
\dmin(G_{m,k+1})\geq\frac{3}{5}\dmin(G)+4\geq4,
\end{equation}
and, therefore, we can apply Eq.~\eqref{eq:diameter-bound} and, as in Eq.\ (\ref{diamGmk}), we obtain that $\diam(G_{m,k+1})< 5/c$. Hence, $P_{m,k+1}$ must have a length smaller than $5/c$. This finishes the proof
\end{proof}

    \begin{proof}[Proof of Theorem~\ref{th:sufficient}]

We will explicitly build a protocol that achieves partial distillation under the given premises. Since $\dmin(G) \leq |G| - 1$ for any graph $G$, $\dmin(G_n)$ being $\Omega(|G_n|)$ implies that there exists $0 < c < 1$ and $n_1 > 0$ such that $\dmin(G_n) \geq c |G_n|$ for every $n > n_1$. In the remainder of the proof we assume that $n$ is large enough so that the above inequality holds. Fix
\begin{equation}
        p_0 = \left(d+1\right)^{- \frac{1}{2^{5/c-1}}}
\end{equation}
and, given $V_0$ of a given fixed size, let
\begin{equation}\label{mnthsuf}
        M_n := \left\lfloor \frac{\min\{\dmin(G_n), c\lambda(G_n)\}}{5|V_0|} \right\rfloor.
\end{equation}
Note then that by our assumptions it holds $\lim_{n \to \infty} M_n = \infty$. By Lemma~\ref{lemma:spider-existence}, for every $V_0 \subset V_n$ and any choice of $v_0 \in V_0$ there are at least $M_n$ spider subgraphs $S_1, \dots, S_{M_n} \subset G_n$ with centre $v_0$ and $|V_0|-1$ legs with length bounded by $5/c$ ending on each of the vertices in $V_0 \setminus \{v_0\}$. Denote in the following by $S_0$ the star graph with vertices given by $V_0$ and $v_0$ as its center. Applying the channel defined in Eq.~\eqref{eq:teleportation-path} to each of the legs of the spider graph $S_j$ allows to transform the isotropic subnetwork state in the spider graph to the isotropic network state in $S_0$ given by
\begin{equation}
        \bigotimes_{v \in V_0 \setminus \{v_0\}} \rho_{(v_0, v)}\left(p^{2^{\ell_j(v)-1}}\right),
\end{equation}
where $\ell_j(v)$ is the length of the path joining $v_0$ with $v$ in $S_j$. Moreover, it is always possible to decrease the visibility of an isotropic state by means of LOCC operations (by mixing the identity channel with a channel that has the normalized identity as fixed output), and so there is an LOCC channel $\Lambda_{S_j}:D(\mathcal{H}_{S_j})\to D(\mathcal{H}_{S_0})$ satisfying
      \begin{equation}
        \Lambda_{S_j} [\sigma(S_j,p)] = \bigotimes_{v \in V_0 \setminus \{v_0\}} \rho_{(v_0, v)}(p^{2^{5/c-1}}).
      \end{equation}
Applying sequentially these channels for all the spider graphs given by Lemma~\ref{lemma:spider-existence} (i.e.\ for $j=1$ up to $j=M_n$) one gets the state
\begin{equation}
        \bigotimes_{v \in V_0 \setminus \{v_0\}} \rho_{(v_0, v)}(p^{2^{5/c-1}})^{\otimes M_n}.
\end{equation}
For $p > p_0$ it holds that $p^{2^{5/c-1}} > 1/(d+1)$, and, since $M_n$ is a growing sequence, the isotropic states can be distilled. Furthermore, the approximating sequence to the maximally entangled state can consist of isotropic states of the same dimension and increasing visibility \cite{isotropicpaper}. Thus, there is an LOCC map $\tilde{\Lambda}_{n,p}:D(\mathcal{H}_{G_n})\to D(\mathcal{H}_{S_0})$ such that 
\begin{equation}\label{distillthsuf}
\tilde{\Lambda}_{n,p}[\sigma(G_n,p)]=\sigma(S_0,p'(n,p)),
\end{equation}
and, for $p>p_0$, $\lim_{n\to\infty}p'(n,p)=1$. The protocol is finished by $v_0$ preparing locally the desired pure state $\psi$ and teleporting it to the remaining parties in $V_0$ using $\sigma(S_0,p'(n,p))$, i.e.\ by implementing the LOCC map $\id\otimes(\mathcal{T}_{p'(n,p)})^{\otimes(|V_0|-1)}$. That is, using the LOCC map $\tilde{\Lambda}_{n,p}$ defined above, the LOCC map $\mathcal{L}_{v_0}:D(\mathcal{H}_{S_0})\to D(\mathcal{H}_{0})\otimes D(\mathcal{H}_{S_0})$ such that $\mathcal{L}_{v_0}(\rho)=\psi\otimes\rho$ $\forall\rho\in D(\mathcal{H}_{S_0})$ where $\mathcal{H}_{0}=(\mathbb{C}^d)^{\otimes|V_0|}$ is held by party $v_0$, and the teleportation channel $\id\otimes\mathcal{T}^{\otimes(|V_0|-1)}: D(\mathcal{H}_{0})\otimes D(\mathcal{H}_{S_0})\to D(\mathcal{H}'_{V_0})$, the map $\Lambda_{n,p}$ in Definition \ref{partialdist} is given by
\begin{equation}\label{teleportthsuf}
\Lambda_{n,p}=[\id\otimes\mathcal{T}^{\otimes(|V_0|-1)}]\circ\mathcal{L}_{v_0}\circ\tilde{\Lambda}_{n,p}
\end{equation}
and Eq.\ (\ref{partialdisteq}) is guaranteed to hold by the continuity of the fidelity and linear maps.
\end{proof}

\subsection{A necessary condition for partial distillability}\label{sec3nec}

\begin{theorem}\label{th:necessary}
If the graph sequence $(G_n)_{n\in\mathbb{N}}$ has partial distillability, then $\lim_{n\to\infty}\lambda(G_n)=\infty$.
\end{theorem}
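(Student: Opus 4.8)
The plan is to prove the contrapositive: assuming the divergence $\lim_{n\to\infty}\lambda(G_n)=\infty$ fails, I will produce a fixed subset $V_0$ and a target $\psi$ for which the fidelity in Eq.~(\ref{partialdisteq}) stays bounded away from $1$ for some $p<1$, contradicting Definition~\ref{partialdist}. If $\lambda(G_n)\not\to\infty$, there is $L\in\mathbb{N}$ and a subsequence $(G_{n_k})$ with $\lambda(G_{n_k})\le L$. By the definition of edge-connectivity, each $G_{n_k}$ admits an edge cut $F_k$ with $|F_k|\le L$ that splits $V_{n_k}$ into two nonempty parts $A_k\mid B_k$, with only the edges of $F_k$ crossing between them.

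The first genuine step is to control the entanglement across this cut. Since $\sigma(G_{n_k},p)=\bigotimes_{e\in E_{n_k}}\rho_e(p)$ factorises over edges, the edges internal to $A_k$ and to $B_k$ produce a state that is a tensor product across the bipartition $A_k\mid B_k$, and hence contribute no entanglement across it; only the $c_k:=|F_k|\le L$ crossing edges carry entanglement, each one copy of $\rho(p)$. Thus, up to appending states that are local to each side (which are free under LOCC and cannot raise the entanglement across $A_k\mid B_k$), the sole resource across the cut is $\rho(p)^{\otimes c_k}$ with $c_k\in\{1,\dots,L\}$.

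The core of the contradiction is then the finite-copy distillation obstruction. Suppose one can fix a $V_0$ and a bipartition $V_0=X\sqcup Y$ separated by the cut, i.e.\ $X\subseteq A_k$ and $Y\subseteq B_k$ for infinitely many $k$, and take $\psi$ entangled across $X\mid Y$. Any admissible $\Lambda_{n_k,p}$ is LOCC with respect to $A_k\mid B_k$, and since $X\subseteq A_k$, $Y\subseteq B_k$ it is in particular LOCC with respect to $X\mid Y$ after discarding the other parties; hence the attained fidelity is at most $\sup_{\Lambda\in\mathrm{LOCC}}F\bigl(\Lambda(\rho(p)^{\otimes c_k}),\psi\bigr)$. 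As $c_k$ takes values in the finite set $\{1,\dots,L\}$, a pigeonhole argument fixes a single $c$ occurring infinitely often, and for that fixed \emph{mixed} resource $\rho(p)^{\otimes c}$ (with $p<1$) this supremum is strictly below $1$: a fixed number of copies of a mixed state cannot be distilled to a pure entangled state, by \cite{fang} together with the continuity of the fidelity. Choosing any $p\in(p_0,1)$ then bounds the fidelity away from $1$ along a subsequence, contradicting Eq.~(\ref{partialdisteq}).

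The hard part will be precisely the selection of a \emph{fixed} $V_0$ and bipartition $X\mid Y$ that the bounded cut separates for infinitely many $k$, because the cut $A_k\mid B_k$ may migrate through the growing network and its smaller side could consist of ever-newer vertices, so that every fixed finite set eventually lies wholly on one side. To handle this I would pass to the smaller side (so $|B_k|\to\infty$) and argue by cases: if some fixed vertex lies in the small side for infinitely many $k$, the straddling pair is obtained directly and the bound above applies with $|V_0|=2$ and $\psi=\phi^+$; the delicate remaining case is when the small sides escape to infinity, where one must show that partial distillability already fails because no LOCC protocol can feed enough entanglement through a cut of bounded size. Since any fixed $V_0$ has only finitely many bipartitions, a pigeonhole over these bipartitions combined with the finite-copy bound is the tool I expect to close the argument, and making this localisation of the cut rigorous is the step I anticipate to require the most care.
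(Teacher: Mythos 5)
Your proposal follows the same route as the paper's own proof: across an edge cut of size at most $L$ the network state factorizes, so up to free local states the only resource crossing the cut is $\rho(p)^{\otimes c_k}$ with $c_k\le L$, and the finite-copy purification obstruction of \cite{fang} (which, note, requires full rank rather than mere mixedness --- satisfied here since $\rho(p)$ is full rank for $p<1$) bounds the achievable fidelity with an entangled pure target away from $1$; your pigeonhole over $c_k\in\{1,\dots,L\}$ makes this bound uniform along the subsequence. Your ``Case 1'', in which some fixed pair of vertices straddles the bounded cuts infinitely often, is complete, and it is in essence the whole of the paper's proof (the paper fixes a pair $u_1,u_2$, considers any sequence of cuts $U_n$ separating it, and concludes $|\partial E(U_n)|\to\infty$). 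The genuine gap is the case you yourself flag and leave open: when the smaller sides of the bounded cuts consist of ever-newer vertices, so that every fixed finite set eventually lies wholly on one side. There you only state the expectation that distillability ``must'' fail, with no argument, so the proposal as written does not prove the theorem.

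Moreover, this gap is not a technicality that more care would have removed: under the literal reading of Definition~\ref{partialdist}, in which $V_0$ is fixed before taking $n\to\infty$, the escaping-cut case cannot be closed at all. Consider $G_n$ given by the complete graph $K_{n-1}$ on $\{1,\dots,n-1\}$ together with the single pendant edge $(1,n)$: then $\lambda(G_n)=\dmin(G_n)=1$ for every $n$, yet any fixed $V_0$ is eventually contained in the complete core, and the LOCC protocol that discards the pendant edge and runs the procedure of Theorem~\ref{th:sufficient} on $\sigma(K_{n-1},p)$ achieves fidelity tending to $1$ for all $p$ above a threshold depending only on $d$. So the conclusion of Theorem~\ref{th:necessary} can only be forced if the definition is read uniformly over subsets of each fixed cardinality (equivalently, if the pair of parties to be served may be chosen per $n$); under that reading your Case-1 argument closes the proof immediately, applied to pairs $u_1^{(k)},u_2^{(k)}$ chosen across a minimizing cut of $G_{n_k}$, the \cite{fang} bound being uniform in $k$ because the resource is never more than $L$ copies of the fixed full-rank state $\rho(p)$. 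It is worth noting that the paper's proof silently performs exactly the quantifier exchange you were worried about: it establishes divergence of $|\partial E(U_n)|$ for cuts separating a \emph{fixed} pair, and then passes to $\lambda(G_n)=\min\{|\partial E(U_n)|:\emptyset\neq U_n\subsetneq V_n\}$ ``since this argument applies to any pair of vertices'', even though the minimizing cuts need not separate any fixed pair. Your instinct that the migrating cut is the crux is therefore correct and sharper than the paper's treatment --- but identifying the crux is not resolving it, and your write-up stops precisely where the work (or the repair of the definition) is required.
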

\begin{proof}
If the graph sequence $(G_n)_{n\in\mathbb{N}}$ has partial distillability, then, in particular, we can distill bipartite maximally entangled states between any pair of parties as $n\to\infty$. Consider in particular the pair of parties given by $u_1,u_2\in V_n$ for $n$ sufficiently large, let $U_n\subset V_n$ be any such sequence of subsets for which $u_1\in U_n$ and $u_2\in\overline{U_n}$ and let $\partial E(U_n)$ denote the edge-boundary of $U_n$, i.e.\ all edges in $E_n$ that link a vertex in $U_n$ with another vertex in its complement. If we now view the state sequence $\{\sigma(G_n,p)\}$ as a bipartite state in the bipartition $U_n|\overline{U_n}$, our premise implies that for some $p<1$ there exists a bipartite LOCC map $\Lambda_n$ such that 
\begin{equation}
\lim_{n\to\infty}F(\Lambda_n(\rho(p)^{\otimes |\partial E(U_n)|}),\phi^+)=1.
\end{equation}
But $\rho(p)$ is full rank when $p<1$ and, therefore by the resuts of \cite{fang}, it follows that $\lim_{n\to\infty}|\partial E(U_n)|=\infty$. Since this argument applies to any pair of vertices and, hence, to any nontrivial bipartition $U_n|\overline{U_n}$, we arrive at the desired conclusion using that $\lambda(G_n)=\min\{|\partial E(U_n)|:\emptyset\neq U_n\subsetneq V_n\}$.
\end{proof}

Since $\dmin(G)\geq\lambda(G)$ for every graph $G$, the above result automatically implies the following corollary.

\begin{corollary}
If the graph sequence $(G_n)_{n\in\mathbb{N}}$ has partial distillability, then $\lim_{n\to\infty}\dmin(G_n)=\infty$.
\end{corollary}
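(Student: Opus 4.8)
The plan is to obtain this corollary as an immediate consequence of Theorem~\ref{th:necessary} combined with the elementary graph-theoretic inequality $\lambda(G)\leq\dmin(G)$, which holds for every graph $G$ and was already recorded in the preliminaries. Because the whole argument amounts to a comparison between a given divergent sequence and a pointwise-larger one, I do not expect any genuine technical obstacle here; the corollary is of the ``automatic consequence'' type and the proof is essentially a one-liner.

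Concretely, I would proceed in three short steps. First, I would invoke Theorem~\ref{th:necessary}: the standing hypothesis that $(G_n)_{n\in\mathbb{N}}$ has partial distillability yields directly that $\lim_{n\to\infty}\lambda(G_n)=\infty$. Second, I would recall the standard fact underlying the chain $\lambda(G)\leq\dmin(G)\leq|G|-1$ noted in Sec.~\ref{sec1}: deleting all edges incident to a vertex of minimal degree already disconnects that vertex from the rest of the graph, so the minimum number of edges whose removal disconnects $G$ is at most $\dmin(G)$; hence $\lambda(G)\leq\dmin(G)$, and applying this to each term gives $\dmin(G_n)\geq\lambda(G_n)$ for all $n$. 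Third, I would conclude by comparison: since $\dmin(G_n)\geq\lambda(G_n)$ for every $n$ and the right-hand side tends to infinity, the left-hand side does as well, i.e.\ $\lim_{n\to\infty}\dmin(G_n)=\infty$.

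The only mildly non-automatic ingredient is the justification of $\lambda(G)\leq\dmin(G)$, but this is a textbook fact and is in any case stated in the preliminaries, so in the write-up it suffices to cite it rather than re-derive it. Thus the ``hard part'' is really just bookkeeping, and the corollary follows at once from Theorem~\ref{th:necessary}.
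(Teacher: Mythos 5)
Your proposal is correct and follows exactly the paper's own reasoning: the paper derives this corollary from Theorem~\ref{th:necessary} together with the inequality $\lambda(G)\leq\dmin(G)$ stated in the preliminaries, precisely as you do.
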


\subsection{Graph sequences with partial distillability and sublinear degree growth}\label{sec3slow}

The results of the previous subsection show that both the minimal degree and the edge-connectivity must diverge in order to have partial distillability, while Theorem \ref{th:sufficient} requires additionally that the minimal degree grows linearly with the order of the graphs to guarantee this property. The following theorem shows that this condition is sufficient but not necessary. To do so, we provide moreover explicit partially distillable graph sequences with relatively slow degree growth. These constructions are taken from Theorem~7.2 in \cite{AGME2}, where partial distillability for arbitrary pairs of parties was proven in order to obtain asymptotically robust GME. Here, we generalize the arguments therein so as to demonstrate partial distillabillity for arbitrary subsets of parties.

\begin{theorem}\label{th:slow}
For any choice of $\alpha\in(0,1)$ there is a sequence of graphs $(G_n)_{n \in \mathbb{N}}$ with partial distillability where $\delta_\mathrm{max}(G_n)$  is $O(|G_n|^\alpha)$.
\end{theorem}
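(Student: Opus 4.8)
The plan is to exhibit a concrete graph sequence whose connectivity is engineered by hand, because the generic route through Theorem~\ref{th:sufficient} is unavailable here: when $\dmin(G_n)$ is only $O(|G_n|^\alpha)$ the constant $c=\dmin(G)/|G|$ in Lemma~\ref{lemma:spider-existence} tends to $0$, so its leg-length bound $5/c$ blows up and the teleportation visibility $p^{2^{5/c-1}}$ degrades below the distillability threshold. I would therefore choose a family with \emph{bounded} diameter and many short edge-disjoint routes built in. Fix an integer $r\geq\lceil 1/\alpha\rceil$ and let $G_n$ be the Cartesian product of $r$ complete graphs on $n$ vertices, $G_n=K_n^{\,\square\, r}$; that is, the vertex set is $[n]^r$ and two vertices are adjacent iff they differ in exactly one coordinate. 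Then $|G_n|=n^r$, the graph is regular of degree $r(n-1)$, so $\dmax(G_n)=r(n-1)=O(|G_n|^{1/r})=O(|G_n|^{\alpha})$ since $1/r\leq\alpha$, it satisfies $V_n\subset V_{n+1}$, and it is connected with diameter $r$ and $\dmin(G_n)=\lambda(G_n)=r(n-1)\to\infty$ (consistent with the necessary condition of Theorem~\ref{th:necessary}).

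Everything then reduces, exactly as in the proof of Theorem~\ref{th:sufficient}, to producing for each fixed subset $V_0$ (size $m$) and each chosen centre $v_0\in V_0$ a number $M_n\to\infty$ of pairwise edge-disjoint spider subgraphs centred at $v_0$, with a leg ending on every other vertex of $V_0$ and all legs of length bounded by a constant $L$ independent of $n$. I would build these explicitly from the product structure. Write $v_0=(a_1,\dots,a_r)$ and a target $w=(b_1,\dots,b_r)\in V_0\setminus\{v_0\}$. Given a \emph{routing value} $z\in[n]$ with $z\neq a_1$, route the leg as $v_0\to(z,a_2,\dots,a_r)\to\cdots\to(z,b_2,\dots,b_r)\to w$: first switch coordinate $1$ to $z$, then fix coordinates $2,\dots,r$ within the slice where the first coordinate equals $z$, and finally switch coordinate $1$ to $b_1$; this has length at most $r+1=:L$. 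Two legs built from distinct routing values are edge-disjoint, because their interior edges live in disjoint slices, while the first and last (coordinate-$1$) edges are determined by the pair (routing value, endpoint) and hence differ; degenerate targets differing from $v_0$ only in coordinate $1$ are handled by routing through a coordinate in which $v_0$ and $w$ agree. Assigning the $M_n(m-1)$ legs of all spiders pairwise distinct routing values—possible whenever $M_n(m-1)\leq n-m$—yields $M_n=\lfloor (n-m)/(m-1)\rfloor\to\infty$ edge-disjoint spiders with legs of length $\leq L$.

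With this spider family in hand the protocol of Theorem~\ref{th:sufficient} applies verbatim: set $p_0=(d+1)^{-1/2^{L-1}}$, teleport each isotropic link sequentially along the legs so that $v_0$ shares with every other party of $V_0$ a growing number $M_n$ of isotropic states of visibility $p^{2^{L-1}}>1/(d+1)$, distill these to near-perfect maximally entangled states on the star $S_0$, and finally let $v_0$ locally prepare the target $\psi$ and teleport it to the remaining parties; continuity of the fidelity and of linear maps then gives Eq.~\eqref{partialdisteq}. The main obstacle—and the only point where the argument genuinely generalizes the pairwise case of \cite{AGME2}—is the edge-disjoint assembly of the legs into full spiders: one must route $v_0$ simultaneously to all of $V_0\setminus\{v_0\}$ along globally disjoint, uniformly short paths while still leaving $\omega(1)$ independent spiders, which is precisely what the distinct-routing-value bookkeeping above secures.
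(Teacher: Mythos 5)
Your proposal is correct and follows essentially the same route as the paper: the graph family $K_n^{\,\square\,r}$ is exactly the paper's construction (vertices $[n]^k$, adjacent iff they differ in one coordinate), the spiders are built by the same routing-value device with values avoiding the coordinates used by $V_0$, and the conclusion is reached by invoking the protocol of Theorem~\ref{th:sufficient} verbatim. The only (inessential) difference is that you route all legs through coordinate $1$ with globally distinct routing values, obtaining $\lfloor (n-m)/(m-1)\rfloor$ spiders, whereas the paper routes through each of the $k$ coordinates and obtains a factor $k$ more; both counts diverge, which is all that is needed.
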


\begin{proof}
We construct explicitly the desired graph sequences and we then show that they have partial distillability. Given an integer $k$, define $G_n = (V_n, E_n)$ of order $|G_n| = n^k$ by
\begin{equation}
        V_n = \{I = (i_1, i_2, \dots, i_k) \mid i_m \in [n],\, m \in [k] \}
\end{equation}
and for $I,J \in V_n$, $(I,J) \in E_n$ if $i_m = j_m$ for all but one value of $m$. It follows that all graphs in the sequence are regular and, in particular, for every $v\in V_n$ it holds that
\begin{equation}
        \delta(v) = k(n-1) = k(|G_n|^{1/k} - 1).
\end{equation}
Therefore, for any $\alpha\in(0,1)$, $k$ can be chosen so that $\delta_\mathrm{max}(G_n)$ is $O(|G_n|^\alpha)$. In the same spirit as in Sec.\ \ref{sec3suf}, we will show in the following that for any $V_0 \subset V_n$ and any choice of $v_0 \in V_0$, there are at least 
\begin{equation}\label{numberspidergraphsthslow}
\left(\left\lfloor \frac{n-1}{|V_0| - 1}\right\rfloor - 1\right)k
\end{equation}
edge-disjoint spider subgraphs in $G_n$ with centre at $v_0$ and $|V_0| - 1$ legs, each one ending on a different vertex in $V_0 \setminus \{v_0\}$ and with length upper bounded by $k+1$. This concludes the proof because the number of spider subgraphs grows with $n$ while we can set an upper bound on their size that is independent of $n$. Thus, repeating the exact same arguments as in the proof of Theorem \ref{th:sufficient} we obtain partial distillability.

Given $V_0$, we construct the spider graphs in the following way. For $0 \leq j < |V_0|$, we denote $v_j = (i_{j1}, i_{j2}, \dots, i_{jk})$ the vertices in $V_0$ and for $m \in [k]$ let $\mathcal{I}_m = \{i_{jm} \mid 0 \leq j < |V_0|\}$.
Note that the cardinality of $\mathcal{I}_m$ must be bounded above by $|V_0|$, and so the complement $\mathcal{I}_m^c = [n] \setminus \mathcal{I}_m$ has at least $n - |V_0|$ elements. If $n>|V_0|$ we can construct paths $\{P_j\}_{j=1}^{|V_0| - 1}$, each joining $v_0$ to $v_j$ without passing through any other vertex in $V_0$, by choosing $r_j^{(1)} \in \mathcal{I}_1^c$ and considering the chain of vertices
\begin{widetext}
\begin{equation}
(i_{01}, i_{02}, \dots, i_{0k}) \to (r_j^{(1)}, i_{02}, \dots, i_{0k}) \to (r_j^{(1)}, i_{j2}, i_{03}, \dots, i_{0k})
\to \dots \to (r_j^{(1)}, i_{j2}, \dots, i_{jk}) \to (i_{j1}, i_{j2}, \dots, i_{jk}).
\end{equation}
\end{widetext}
This chain will have repeated consecutive vertices if $i_{0m} = i_{jm}$ for some $m \neq 1$. Removing the repeated vertices from the previous chain we obtain the path $P_j$, whose length is at most $k+1$ by construction. Moreover, if $n \geq 2|V_0|-1$ then $|\mathcal{I}_m^c| \geq |V_0|-1$ $\forall m$, which allows one to take $r_j^{(1)} \neq r_{j'}^{(1)}$ for $j \neq j'$, enforcing all the paths $\{P_j\}_{j=1}^{|V_0| - 1}$ to be disjoint. Therefore, for $n$ large enough $G_n$ has $S_1 = \bigcup_{j=1}^{|V_0| - 1} P_j$ as a spider subgraph with uniformly bounded size, centre at $v_0$ and a leg ending on each vertex of $V_0 \setminus \{v_0\}$.

The construction before can be repeated for any $m \in [k]$ by taking $r_j^{(m)} \in \mathcal{I}_m^c$ and using now the chain
\begin{widetext}
\begin{equation}
        v_0 \to (i_{01}, \dots, i_{0m-1}, r_j^{(m)}, i_{0m+1}, \dots, i_{0k})
        \to \dots \to (i_{j1}, \dots, i_{jm-1}, r_j^{(m)}, i_{jm+1}, \dots, i_{jk})
        \to v_j,
\end{equation}
\end{widetext}
which yields a family $\{S_m\}_{m=1}^k$ of edge-disjoint spider graphs with size bounded by $(|V_0| - 1)(k+1)$.

Now, if $n \geq \nu |V_0|-\nu+1$ for some $\nu\in\mathbb{N}$, then $|\mathcal{I}_m^c| \geq (\nu-1)(|V_0|-1)$ $\forall m$. Thus, the construction above can be repeated by allowing the $\{r_j^{(m)}\}$ to take more values and obtain at least $(\nu - 1)k$ edge-disjoint spider graphs with size at most $(|V_0| - 1)(k+1)$. This finishes the proof.
    \end{proof}

\section{Conclusions}\label{sec4}

In this paper we have introduced the problem of partial distillability in quantum networks of growing size. We have shown that if isotropic states with visibility above a certain fixed threshold are distributed, then any state can be obtained by LOCC for any subset of parties of any given cardinality with fidelity arbitrarily close to 1 as the network increases provided the graph sequence $(G_n)_{n\in\mathbb{N}}$ that describes the network configuration is sufficiently connected. In more detail, we have shown that it is necessary that both the minimal degree and the edge-connectivity grow unboundedly with $|G_n|$ and that it is sufficient if additionally the minimal degree grows linearly with $|G_n|$. Furthermore, we have proven that this latter condition is not necessary by providing instances of graph sequences with partial distillabillity such that the degree grows sublinearly and, in fact, with any fractional power of choice, no matter how close to 0. In this regard, it remains open to characterize more precisely the optimal degree growth. What is the slowest degree growth at which partial distillability can exist? What is the slowest degree growth at which every graph sequence satisfying it is partially distillable?

Partial distillability also seems to have an intriguing connection with the notion of asymptotically robust GME in networks \cite{AGME1,AGME2}. This latter property can arise even if the edge-connectivity is bounded and, therefore, it does not imply partial distillability. However, all graph sequences that can be shown to be partially distillable by the results obtained here happen to have asymptotically robust GME (by Theorem 6.10 and Theorem 7.2 in \cite{AGME2}). Thus, it would be interesting to investigate if this property is a necessary condition for partial distillability. This would for instance imply that partial distillability could not exist at sublogarithmic degree growth (cf.\ Theorem 6.1 in \cite{AGME2}). As we have already mentioned, the problem of asymptotically robust GME can be related to partial distillability between arbitrary pairs of parties. An interesting open question in this direction is whether there exist graph sequences with partial distillability for a particular choice of the cardinality $m$ of the subsets of parties but that are not partially distillable for $m+1$.

Finally, different variations and generalizations of this problem can also be studied in future work. For instance, one could allow the cardinality $m$ of the subsets of parties to depend on the growing number of parties $N$. Our proof could in principle work as well to generalize to this case Theorem \ref{th:sufficient} if one restricts $m$ to be $o(\lambda(G_n))$ (cf.\ Eq.\ (\ref{mnthsuf})). However, one would need to study more carefully the speed at which $p'$ in Eq.\ (\ref{distillthsuf}) goes to 1 so as to compensate the increasing amount of noise that $\mathcal{T}^{\otimes(|V_0|-1)}$ in Eq.\ (\ref{teleportthsuf}) now introduces. The same comment applies to the result of Theorem \ref{th:slow} if $m$ is $o(|G_n|^\alpha)$ (cf.\ Eq.\ (\ref{numberspidergraphsthslow})). 

\begin{acknowledgments}
We thank Carlos Palazuelos for useful discussions. We acknowledge financial support from Comunidad de Madrid (grant QUITEMAD-CM TEC-2024/COM-84). A.B. acknowledges financial support from the Spanish Ministerio de Ciencia, Innovaci\'on y Universidades (grant PID2020-117477GB-I00 and through the UC3M Margarita Salas 2021-2023 program). J. I. de V. acknowledges financial support from the Spanish Ministerio de Ciencia, Innovaci\'on y Universidades (grant PID2023-146758NB-I00, grant PID2024-160539NB-I00 and ``Severo Ochoa Programme for Centres of Excellence" grant CEX2023-001347-S funded by MCIN/AEI/10.13039/501100011033).
\end{acknowledgments}


\begin{thebibliography}{widest-label}

\bibitem{1wayqc} R. Raussendorf and H. J. Briegel, Phys. Rev. Lett. \textbf{86}, 5188 (2001).

\bibitem{secretsharing} M. Hillery, V. Bu\v{z}ek, and A. Berthiaume, Phys. Rev. A \textbf{59}, 1829 (1999); D. Gottesman, Phys. Rev. A 61, 042311 (2000).

\bibitem{qcka} G. Murta, F. Grasselli, H. Kampermann, and D. Bru\ss, Adv. Quantum Technol. \textbf{3}, 2000025 (2020).

\bibitem{gmesensing1} P. Hyllus, W. Laskowski, R. Krischek, C. Schwemmer, W. Wieczorek, H. Weinfurter, L. Pezz\'{e}, and A. Smerzi, Phys. Rev. A \textbf{85}, 022321 (2012).

\bibitem{gmesensing2} G. T\'oth, Phys. Rev. A \textbf{85}, 022322 (2012).

\bibitem{mes} J. I. de Vicente, C. Spee, and B. Kraus, Phys. Rev. Lett. \textbf{111}, 110502 (2013).

\bibitem{noghz} M. Van den Nest, A. Miyake, W. D\"ur, and H. J. Briegel, Phys. Rev. Lett. \textbf{97}, 150504 (2006).
    
\bibitem{distillation} W. D\"ur and H. J. Briegel, Rep. Prog. Phys. \textbf{70}, 1381 (2007).

\bibitem{qinternet} S. Wehner, D. Elkouss, and R. Hanson, Science \textbf{362}, eaam9288 (2018).

\bibitem{networks1} M. Cuquet and J. Calsamiglia, Phys. Rev. A \textbf{86}, 042304 (2012).

\bibitem{networks2} C. Meignant, D. Markham, and F. Grosshans, Phys. Rev. A \textbf{100}, 052333 (2019).

\bibitem{networks3} G. Avis, F. Rozpedek, and S. Wehner, Phys. Rev. A \textbf{107}, 012609 (2023).

\bibitem{networks4} S. S. Chelluri, S. Khatri, and P. van Loock, arXiv:2412.04252 (2024).
    
\bibitem{reviewn} K. Azuma, S. B\"auml, T. Coopmans, D. Elkouss, and B. Li, AVS Quantum Sci. \textbf{3}, 014101 (2021).

\bibitem{percolationreview} S. Perseguers, G. J. Lapeyre Jr., D. Cavalcanti, M. Lewenstein, A. Acin, Rep. Prog. Phys. \textbf{76}, 096001 (2013).

\bibitem{bellnetworks} A. Tavakoli, A. Pozas-Kerstjens, M.-X. Luo, and M.-O. Renou, Rep. Prog. Phys. \textbf{85}, 056001 (2022).

\bibitem{yamasakin} H. Yamasaki, A. Soeda, and M. Murao, Phys. Rev. A \textbf{96}, 032330 (2017).

\bibitem{speen} C. Spee and T. Kraft, Quantum \textbf{8}, 1286 (2024).

\bibitem{pairable} S. Bravyi, Y. Sharma, M. Szegedy, and R. de Wolf, Quantum \textbf{8}, 1348 (2024).

\bibitem{networkGMNL} P. Contreras-Tejada, C. Palazuelos, and J. I. de Vicente, Phys. Rev. Lett. \textbf{126}, 040501 (2021).

\bibitem{AGME1} P. Contreras-Tejada, C. Palazuelos, and J. I. de Vicente, Phys. Rev. Lett. \textbf{128}, 220501 (2022).

\bibitem{AGME2} F. Lled\'o, C. Palazuelos, and J. I. de Vicente, arXiv:2411.12548 (2024).

\bibitem{fang} K. Fang and Z.-W. Liu, Phys. Rev. Lett. \textbf{125}, 060405 (2020).

\bibitem{bondy} J. A. Bondy and U. S. R. Murty, \textit{Graph Theory} (Springer, New York, 2008).

\bibitem{diestel} R. Diestel, \textit{Graph Theory} (Springer, New York, 2017).

\bibitem{isotropicpaper} M. Horodecki and P Horodecki, Phys. Rev. A \textbf{59}, 4206 (1999).

\bibitem{locc1} E. Chitambar, D. Leung, L. Mancinska, M. Ozols, and A. Winter, Commun. Math. Phys. \textbf{328}, 303 (2014).

\bibitem{locc2} M. Hebenstreit, M. Englbrecht, C. Spee, J. I. de Vicente, and B. Kraus, New J. Phys. \textbf{23}, 033046 (2021).

\bibitem{pptundistillable} M. Horodecki, P. Horodecki, and R. Horodecki, Phys. Rev. Lett. \textbf{80}, 5239 (1998).

\bibitem{teleport} C. H. Bennett, G. Brassard, C. Cr\'epeau, R. Jozsa, A. Peres, and W. K. Wootters, Phys. Rev. Lett. \textbf{70}, 1895 (1993).

\bibitem{nielsenchuang} M. A. Nielsen and I. L. Chuang, \textit{Quantum Computation and Quantum Information} (Cambridge University Press, Cambridge, 2010).

\bibitem{DurCirac2000} W. D\"ur and J. I. Cirac, Phys. Rev. A \textbf{61}, 042314 (2000).

\bibitem{GMEactivation1} H. Yamasaki, S. Morelli, M. Miethlinger, J. Bavaresco, N. Friis, and M. Huber, Quantum \textbf{6}, 695 (2022).

\bibitem{GMEactivation2} C. Palazuelos and J. I. de Vicente, Quantum \textbf{6}, 735 (2022).

\bibitem{distillationbiseparable} M. Huber and M. Plesch, Phys. Rev. A 83, 062321 (2011).

\bibitem{ErdosPachPollackEtAl1989} P. Erd\"os, J. Pach, R. Pollack, and Z. Tuza, J. Combin. Theory Ser. B 47, 73 (1989).

\end{thebibliography}
\end{document}